\begin{document}

\title{On Hedonic Games with Common Ranking Property\footnote{A preliminary version of this work have appeared in the proceedings of the \nth{11} International Conference on Algorithms and Complexity (CIAC 2019) \cite{ciac19}}
}


\author{Bugra Caskurlu \and
        Fatih Erdem Kizilkaya 
}


\institute{Bugra Caskurlu \and Fatih Erdem Kizilkaya \at
        TOBB University of Economics and Technology, \\
            Department of Computer Engineering, \\
            Ankara, Turkey        \\
            \email{\{bcaskurlu,fkizilkaya\}@etu.edu.tr}\\
}

\date{Received: date / Accepted: date}

\maketitle

\begin{abstract}
Hedonic games are a prominent model of coalition formation, in which each agent's utility only depends on the coalition she resides. The subclass of hedonic games that models the formation of general partnerships \cite{law}, where output is shared equally among affiliates, is referred to as hedonic games with common ranking property (HGCRP). Aside from their economic motivation, HGCRP came into prominence since they are guaranteed to have core stable solutions that can be found efficiently \cite{partnerships}. We improve upon existing results by proving that every instance of HGCRP has a solution that is Pareto optimal, core stable and individually stable. The economic significance of this result is that efficiency is not to be totally sacrificed for the sake of stability in HGCRP. We establish that finding such a solution is {\bf NP-hard} even if the sizes of the coalitions are bounded above by $3$; however, it is polynomial time solvable if the sizes of the coalitions are bounded above by $2$. We show that the gap between the total utility of a core stable solution and that of the socially-optimal solution (OPT) is bounded above by $n$, where $n$ is the number of agents, and that this bound is tight. Our investigations reveal that computing OPT is inapproximable within better than $O(n^{1-\epsilon})$ for any fixed $\epsilon > 0$, and that this inapproximability lower bound is polynomially tight. However, OPT can be computed in polynomial time if the sizes of the coalitions are bounded above by $2$.
\keywords{social choice, hedonic games, common ranking property}
\end{abstract}

\section{Introduction}
\label{sec:intro}
The class of games, where a finite set of agents are to be partitioned into groups (coalitions) is known as \textit{coalition formation games} \cite{coal}. A coalition formation game is said to be a \textit{hedonic coalition formation game} (or simply \textit{hedonic game}) if the utilities of the agents exclusively depend on the coalition they belong to \cite{seminal-hedonic}, i.e., the agents do not worry about how the remaining agents are partitioned.

Hedonic games subsume a wide range of problems, which includes well-known matching problems, such as the stable marriage \cite{marriage}, the stable roommates \cite{roommates}, and the hospital/residents problems \cite{marriage}. Moreover, hedonic games also found their way into artificial intelligence research since they provide a simple yet versatile formal model for various coordination problems in distributed and multi-agent systems \cite{multiagent-systems}. Recently, hedonic games have been used to develop a novel framework to solve a task allocation problem in swarm robotics \cite{swarm-robotics}.

In this paper, we focus on another important application of hedonic games referred to as \textit{formation of partnerships}, which corresponds to one of the first and the most natural subclasses of hedonic games studied by economists in the late 1980s. However, computational aspects of this subclass of hedonic games remained largely\footnote{To our knowledge, the only known result about this subclass of hedonic games is that there always exists a core stable partition which can be found in polynomial time.} open since then.

Partnerships are formal agreements made by the founders of for-profit start-up ventures when the company is founded \cite{law}. Even though there are several different types of partnerships, the most common form is referred to as general partnership, where all parties share the legal and financial liabilities as well as the profit of the partnership equally. A hedonic game, where all agents of a coalition receive the same utility (as is the case in general partnerships) is said to possess \textit{common ranking property} \cite{partnerships}. Hedonic games with common ranking property (HGCRP) can, of course, model the formation of groups other than partnerships. For example, the formation of class project groups where the students in the same group receive the same grade can be modeled in that setting \cite{sagt21}.

In game theory, the main objective is to determine the viable outcomes of a game under the assumption that the agents are rational. This is typically done by means of solution concepts. Each solution concept, designed with the intent of modeling the behavior of rational agents, comes with a set of axioms to be satisfied by the outcome of the game. The outcome of a hedonic game is a partition of agents, which is also referred to as a \textit{coalition structure}.

Existence and tractability of Nash stable \cite{nash-stable}, individually stable \cite{individually}, contractually individually stable \cite{contractually} and core stable \cite{core-stable} outcomes of various subclasses of hedonic games are extensively studied. The most prominent solution concept used in hedonic games literature is the core stability, where a partition of agents is defined as stable if no subset of agents can form a new coalition together so that the utilities of all the agents in this subset are increased \cite{core-stable}.

Notice that core stability is the stability concept used in the definition of classical matching problems such as the stable marriage and the stable roommates. Some subclasses of hedonic games, such as the stable roommates, may not have any core stable outcome. However, every HGCRP instance is guaranteed to have a core stable outcome and such an outcome can be computed via a simple greedy algorithm \cite{partnerships}.

Since forming a completely new coalition might be hard to coordinate, there are also stability concepts resilient to single-agent deviations. An outcome is Nash stable if no agent can benefit by moving from her coalition to another existing (possibly empty) coalition. An outcome is said to be individually stable if no agent can benefit by moving from her coalition to another existing (possibly empty) coalition while not making the members of her new coalition worse off. Notice that a Nash stable outcome is also individually stable, but not vice versa. A core stable outcome is not necessarily individually stable, and an individually stable outcome is not necessarily core stable.

Pareto optimality is the most widely adopted concept of efficiency in game theory \cite{pareto}. In hedonic games, a partition $\pi$ of agents is called a Pareto optimal coalition structure, if no other partition $\pi'$ of agents makes some agents better off without making some agents worse off. Pareto optimality is a desirable property for the outcome of a game, and every finite game is guaranteed to have a Pareto optimal outcome. However, note that Pareto optimal outcomes do not necessarily coincide with stable outcomes of a game. As an extreme example, in the classical Prisoner's Dilemma game \cite{agt}, all unstable outcomes are Pareto optimal, whereas the unique stable outcome is not.

\medskip

\textbf{Our Contributions}

\medskip

A core stable coalition structure of an HGCRP instance is not necessarily Pareto optimal (see Example \ref{Ex1}), and a Pareto optimal coalition structure of an HGCRP instance is not necessarily core stable (see Example \ref{Ex2}). It is also clear that individually stable outcomes do not coincide with Pareto optimal or core stable outcomes. Thus, a natural research question (RQ) is as follows:

\medskip

\textbf{RQ1.} \textit{Does there always exist a partition $\pi$ of agents of a given HGCRP instance, such that $\pi$ is Pareto optimal, core stable and individually stable?}

\medskip

Theorem \ref{thm:Existence} answers this question affirmatively via a constructive proof using the potential function method \cite{potential}.

\medskip

Since we resolved the existence problem, the next immediate research direction is establishing the computational complexity of finding such a partition of agents of a given HGCRP instance. In order to address this problem, we first need to state how HGCRP instances are represented.

Since the representation of a hedonic game instance requires exponential\footnote{In a hedonic game, the utility value of each agent is defined over all subsets of agents containing her. Thus, if $n$ is the number of agents, the input size is $O(n \cdot 2^{n - 1})$. For HGCRP instances, we do not need to define separate utility values for each agent, rather, we only need to have a joint utility value for each nonempty coalition (since all agents in the coalition have the same utility), and thus the input size is $O(2^n - 1)$.
} space in the number of agents, a lot of effort is spent in defining subclasses of hedonic games with concise representations\footnote{A representation of a hedonic game is generally called concise if it is polynomial in the number of agents (for example, see \cite{nets}). However, note that such a representation would not be capable of representing all HGCRP instances.}. In this paper, we assume that an HGCRP instance is represented as \textit{individually rational coalition lists} \cite{ballester}, i.e., the utility value of a subset $S$ of agents is omitted, if the utility value of $S$ is smaller than that of one of its singleton subsets. This is because no stable outcome can contain such a coalition. Since we have a model for input representation, we are ready to state the next immediate RQ as follows.

\medskip

\textbf{RQ2.} \textit{What is the computational complexity of finding a Pareto optimal, core stable and individually stable partition of a given HGCRP instance?}

\medskip

The problem is {\bf NP-hard} as stated by Corollary \ref{cor:Complexity}, since finding any Pareto optimal partition is {\bf NP-hard} due to Theorem \ref{thm:Complexity}. On the other hand, finding a core stable and individually stable partition is polynomial time solvable due to Theorem \ref{thm:GreedyAlg}.

\medskip

An outcome of the game maximizing the sum of the utilities of the agents is referred to as a \textit{socially-optimal solution} (OPT). Since a socially-optimal solution is Pareto optimal by definition, finding a socially-optimal solution is also {\bf NP-hard}, which brings us to the next immediate RQ.

\medskip

\textbf{RQ3.} \textit{How closely is it possible to approximate a socially-optimal solution of a given HGCRP instance?}

\medskip

Computing OPT of a given HGCRP instance is inapproximable within better than $O(n^{1-\epsilon})$ for any fixed $\epsilon > 0$ (unless {\bf P = NP}) due to Theorem \ref{thm:Inapproximability}, where $n$ is the number of agents. Moreover, this lower bound is polynomially tight since computing OPT is $n$-approximable due to Corollary \ref{cor:Approximability}.

\medskip

There are several metrics in the literature to quantify the loss of total utility in a game due to the selfish behavior of the agents. The most popular of these metrics are the \textit{price of anarchy} \cite{poa}, and the \textit{price of stability} \cite{pos}. The price of anarchy and the price of stability of a game are defined as the supremum of the ratio of the total utility of OPT to that of the socially worst and best stable solutions, respectively, over all instances of the game. In this paper, by a stable solution, we mean a core stable partition of agents. The next immediate RQ is about these metrics.

\medskip

\textbf{RQ4.} \textit{What are the price of anarchy and the price of stability of HGCRP?}

\medskip

Both the price of anarchy and the price of stability of HGCRP are $n$, where $n$ is the number of agents, due to Theorem \ref{thm:Quality}.

\medskip

Since almost all of the computational problems we study in this paper turn out to be intractable, it is of great interest to explore restrictions where those problems might be tractable. The most natural restriction is bounding the sizes of the coalitions. We consider this restriction in the last RQ.

\medskip

\textbf{RQ5.} \textit{ Is any of the above computational problems tractable in the restricted setting where the sizes of the coalitions are bounded by a constant?}

\medskip

When the sizes of the coalitions are bounded above by $2$, not only the problem of finding a socially-optimal solution, but also the problem of finding a Pareto optimal, core stable, and individually stable partition can be solved in polynomial time due to Theorems \ref{thm:MathchingAlg1} and \ref{thm:MathchingAlg2}, respectively. However, even when the sizes of the coalitions are bounded above by $3$, finding a Pareto optimal partition is still {\bf NP-hard} by Corollary \ref{cor:FixedPerfect}; and finding a socially-optimal solution is {\bf APX-complete} by Corollary \ref{cor:Inapproximability}.

\section{Notation and Preliminaries}
\label{sec:notation}

We define an instance of HGCRP as a binary pair $\mathcal{G} = (N, U)$, where $N$ is a finite set of $n$ agents, and $U : 2^N \setminus \emptyset \rightarrow \mathbb{R_+}$ is a non-negative real-valued function defined over the nonempty subsets of $N$. We assume that an instance is represented as \textit{individually rational coalition lists} (IRCL) \cite{ballester}, i.e., if $U(S) < U(\{i\})$ for some agent $i \in S$ for a subset of agents $S \subseteq N$, then $U(S)$ is omitted from the list, since $S$ is not a viable coalition in any stable coalition structure.

\medskip

The solution (outcome) of a game is a partition (coalition structure) $\pi$ over the set of agents $N$. The coalition containing an agent $i \in N$ in partition $\pi$ is denoted by $\pi(i)$. In a partition $\pi$, the utilities of all the agents in the same coalition $S \in \pi$ are the same, and equal to the \textit{joint utility} $U(S)$. We use $u_i(\pi)$ to denote the \textit{utility} of some agent $i$ in partition $\pi$. Notice that $u_i(\pi) = U(\pi(i))$.

\medskip

A nonempty subset $S \subseteq N$ of agents is said to be a \textit{blocking coalition with respect to partition} $\pi$,
if $U(S) > u_i(\pi)$ for all agents $i \in S$, i.e., any agent $i \in S$ is \textit{strictly} better off in $S$ than she is in $\pi(i)$. A coalition structure $\pi$ is \textit{core stable} if there is no blocking coalition with respect to it.

\medskip

For a partition $\pi$ and a coalition $S$, we define $\pi_S$ as the partition induced on $\pi$ by $S$. $\pi_S$ is the partition that would arise if the agents in $S$ collectively deviated from $\pi$ to form coalition $S$, i.e., $\pi_S(i) = S$ for all $i \in S$, and $\pi_S(j) = \pi(j) \setminus S$ for all $j \in N \setminus S$. Notice that $\pi_S$ may or may not be core stable.

\medskip

Pareto optimality is a measure to assess the social quality of the solutions of a game with respect to alternative solutions of the same game. For two coalition structures $\pi$ and $\pi'$ over the set of agents $N$, we say that $\pi'$ \textit{Pareto dominates} $\pi$ if, $u_i(\pi') \ge u_i(\pi)$ for all agents $i \in N$, and there exists an agent $i$ for which the inequality is strict. In other words, if $\pi'$ Pareto dominates $\pi$, then all agents are at least as good in $\pi'$ than in $\pi$, and there is an agent that is strictly better off in $\pi'$ than in $\pi$. A coalition structure $\pi$ is said to be \textit{Pareto optimal} if no coalition structure Pareto dominates it.

\bigskip

We next present two examples to illustrate the notions of core stability and Pareto optimality. In Example \ref{Ex1}, we present a core stable partition that is not Pareto optimal, and in Example \ref{Ex2}, we present a Pareto optimal partition that is not core stable. That is, Pareto optimality and core stability are independent concepts.

\begin{example}
\label{Ex1}
Let $\mathcal{G} = (N, U)$ be an HGCRP instance, where $N = \{1, 2\}$, and $U$ is defined as $U(\{1\}) = U(\{1, 2\}) = 1$, and $U(\{2\}) = 0$. Notice that the coalition structure $\pi = \{\{1\}, \{2\}\}$ is core stable since there is no blocking deviation with respect to it. But $\pi$ is not Pareto optimal since the partition $\pi' = \{\{1,2\}\}$ Pareto dominates $\pi$.
\end{example}

\medskip

\begin{example}
\label{Ex2}
Let $\mathcal{G} = (N, U)$ be an HGCRP instance, where $N = \{1, 2, 3\}$, and $U$ is defined as $U(\{1\}) = 0$, $U(\{2, 3\}) = 2$ and $U(S) = 1$ for all other subsets $S \subseteq N$ of agents. Notice that partition $\pi = \{\{1, 2\}, \{3\}\}$ is Pareto optimal since no partition Pareto dominates it. However, $\pi$ is not core stable since $S = \{2, 3\}$ is a blocking coalition with respect to $\pi$. Notice that $\pi_S = \{\{1\}, \{2, 3\}\}$.
\end{example}

The most prominent stability concepts using only unilateral deviations in the hedonic games literature are as follows:

\medskip

\begin{compactitem}
\item[--] A partition $\pi$ is \textit{Nash stable} if no agent can benefit by moving from her coalition $S \in \pi$ to an existing (possibly empty) coalition $T \in \pi \cup \{\emptyset\}$.
\item[--] A partition $\pi$ is \textit{individually stable} if no agent can benefit by moving from her coalition $S \in \pi$ to an existing (possibly empty) coalition $T \in \pi \cup \{\emptyset\}$ while not making the members of $T$ worse off.
\end{compactitem}

\medskip

Notice that a Nash stable partition is also individually stable by definition, but not vice versa. An HGCRP instance does not necessarily have a Nash stable partition as can be seen in Example \ref{Ex3}. Therefore, we rather focus on individual stability in this paper.

\begin{example}
\label{Ex3}
Let $\mathcal{G} = (N, U)$ be an HGCRP instance, where $N = \{1, 2\}$, and $U$ is defined as $U(\{1\}) = 1$, $U(\{1, 2\}) = 2$, and $U(\{2\}) = 3$. Notice that this HGCRP instance is simply the stalker game. Therefore, there does not exist a Nash stable partition. On the other hand, $\pi = \{\{1\}, \{2\}\}$ is individually stable.
\end{example}



\section{Pareto Optimal, Core Stable and Individually Stable Partitions}
\label{sec:ParOptCorStaPar}

In this section, we first prove the existence of a Pareto optimal, core stable and individually stable partition for any HGCRP instance. We then establish the computational complexity of finding such a partition.

We prove the existence result, given below as Theorem \ref{thm:Existence}, by presenting a potential function defined over the set of coalition structures of the given HGCRP instance. The potential function is maximized at a Pareto optimal, core stable, and individually stable partition.

\begin{theorem}
\label{thm:Existence}
Every HGCRP instance $\mathcal{G} = (N, U)$ has a coalition structure that is Pareto optimal, core stable and individually stable at the same time.
\end{theorem}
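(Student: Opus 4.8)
The plan is to produce an explicit potential function on the (finite) set of coalition structures whose maximizer simultaneously enjoys all three properties. For a partition $\pi$, let $\Phi(\pi) = (u_{(1)}(\pi), u_{(2)}(\pi), \ldots, u_{(n)}(\pi))$ be the vector of agent utilities $u_i(\pi)$ sorted in non-increasing order, and compare such vectors lexicographically, writing $>_{\mathrm{lex}}$ for this order. Because $N$ is finite there are only finitely many coalition structures, hence only finitely many such vectors, so a lexicographically maximal partition $\pi^*$ exists. I claim this $\pi^*$ is Pareto optimal, core stable, and individually stable, and the whole proof reduces to showing that each type of ``improving'' transition strictly increases $\Phi$.

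First I would record a routine monotonicity fact: if $u_i(\pi') \ge u_i(\pi)$ for every $i \in N$, then the sorted vectors satisfy $u_{(k)}(\pi') \ge u_{(k)}(\pi)$ for every $k$, so $\Phi(\pi') \ge_{\mathrm{lex}} \Phi(\pi)$, with strict inequality once some agent is strictly better off (since then the two sorted vectors are unequal yet dominate entrywise). Pareto optimality of $\pi^*$ is then immediate: a partition Pareto dominating $\pi^*$ would have strictly larger potential, contradicting maximality.

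The heart of the argument is to handle the two stability concepts by a single deviation analysis. In both cases the deviation creates one new coalition $W$ of joint utility $t := U(W)$: for core stability $W$ is a blocking coalition and the new partition is $\pi^*_W$; for individual stability $W = T \cup \{i\}$ is the coalition that agent $i$ joins. The step I expect to be the main obstacle is controlling the agents whose utility may \emph{decrease}, namely those left behind in coalitions that shed members. The resolution is a threshold observation: every agent that moves into $W$ strictly improves to $t$, so the coalition it leaves had joint utility strictly below $t$, and the only other modified coalition (in the individual-stability case, the pre-image $T$) has utility at most $t$. Consequently every coalition with joint utility \emph{strictly greater} than $t$ is left untouched by the deviation, so no agent sitting above level $t$ is affected at all.

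From this observation I would extract two counting statements about the new partition $\pi'$: for every threshold $\theta \ge t$ the number of agents with utility exceeding $\theta$ does not decrease (the agents above $\theta$ live in untouched coalitions), while the number of agents with utility at least $t$ \emph{strictly} increases (the members of $W$ now sit at value exactly $t$, and those that genuinely moved in were previously below $t$, whereas any left-behind agents were also below $t$ and can only add to the count). Translating the lexicographic comparison of sorted-descending vectors into these above-threshold counts, non-decrease for all $\theta \ge t$ together with a strict increase at level $t$ forces $\Phi(\pi') >_{\mathrm{lex}} \Phi(\pi^*)$. This contradicts the maximality of $\pi^*$, so $\pi^*$ admits neither a blocking coalition nor an individually improving move; combined with the Pareto optimality established above, this proves the theorem.
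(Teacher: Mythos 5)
Your proposal is correct and follows essentially the same route as the paper: the identical potential function (the utility vector sorted in non-increasing order, compared lexicographically), a maximizer guaranteed by finiteness, and three monotonicity claims showing that Pareto improvements, blocking coalitions, and individually-stable deviations each strictly increase the potential. The only difference is bookkeeping: where the paper verifies the lexicographic increase by carefully ordering agents within coalitions and locating the first differing index, you use above-threshold counts at and above the level $t = U(W)$, which is a valid (and arguably cleaner) way to establish the same lemmas and also unifies the core-stability and individual-stability cases.
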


\begin{proof}
We fix an HGCRP instance $(N, U)$ with $n$ agents. We define $\psi(\pi)$ as the sequence of the utilities of the agents in partition $\pi$ in \textit{non-increasing order}. We use the symbols $\rhd$ and $\unrhd$, respectively, to denote the binary relations ``lexicographically greater than'' and ``lexicographically greater than or equal to'' over the set of sequences of utilities of agents. We also denote the $i^{th}$ element in the sequence $\psi(\pi)$ by $\psi_i(\pi)$.

\begin{lemma}
\label{lem:Pareto}
If $\pi'$ Pareto dominates $\pi$, then $\psi(\pi') \rhd \psi(\pi)$.
\end{lemma}

\begin{proof}
We first rename the agents so that $\psi_i(\pi)$ is the utility of agent $i$ in partition $\pi$. Notice that we have $u_1(\pi) \geq u_2(\pi) \geq \ldots \geq u_n(\pi)$. Let $\psi'(\pi')$ be the permutation of $\psi(\pi')$, where $\psi'_i(\pi')$ and $\psi_i(\pi)$ are the respective utilities of the same agent for all $i \in N$. Notice that $\psi'_i(\pi') = u_i(\pi')$ and $\psi_i(\pi) = u_i(\pi)$.

Since $\pi'$ Pareto dominates $\pi$, we have $u_i(\pi') \geq u_i(\pi)$ for all $i \in N$, and $u_j(\pi') > u_j(\pi)$ for some agent $j$. Hence, $\psi'_i(\pi') \geq \psi_i(\pi)$ for all $i \in N$, and $\psi'_j(\pi') > \psi_j(\pi)$ for some agent $j$. Therefore, $\psi'(\pi') \rhd \psi(\pi)$. Notice that $\psi(\pi') \unrhd \psi'(\pi')$ since $\psi(\pi')$ is the same sequence as $\psi'(\pi')$ but sorted in non-increasing order. Hence, $\psi(\pi') \unrhd \psi'(\pi') \rhd \psi(\pi)$. \qed
\end{proof}

\begin{lemma}
\label{lem:Core}
If $S$ is a blocking coalition with respect to $\pi$, then $\psi(\pi_S) \rhd \psi(\pi)$.
\end{lemma}

\begin{proof}
Due to the common ranking property, we can assume without loss of generality, that the utilities of agents that are in the same coalition in partition $\pi$ are listed consecutively in $\psi(\pi)$. Moreover, we can assume without loss of generality, that for a coalition $G$ in partition $\pi$, the utilities of agents in $G \cap S$ precede in the ordering of $\psi(\pi)$ of those in $G \setminus S$. We also rename agents such that $\psi_i(\pi)$ correspond to the utility of agent $i$ in partition $\pi$. Notice that we have $u_1(\pi) \geq u_2(\pi) \geq \ldots \geq u_n(\pi)$.

Let $\psi'(\pi_S)$ be a permutation of $\psi(\pi_S)$, where $\psi'_i(\pi_S)$ and $\psi_i(\pi)$ are the respective utilities of the same agent for all $i \in N$. Then, $\psi'_i(\pi_S) = u_i(\pi_S)$ and $\psi_i(\pi) = u_i(\pi)$.

Let $i$ be the agent with the smallest index such that $\psi_i(\pi) >  \psi'_i(\pi_S)$. That is, $u_i(\pi) > u_i(\pi_S)$.  This implies $i \notin S$, since otherwise we would have  $u_i(\pi) < u_i(\pi_S)$. Additionally, there must be an agent $j \in \pi(i)$ such that $j \in S$, since otherwise we would have $\pi(i) = \pi_S(i)$, and it would mean $u_i(\pi) = u_i(\pi_S)$. Note that $u_j(\pi_S) > u_j(\pi)$ since $j \in S$, which means $\psi'_j(\pi_S) > \psi_j(\pi)$. Also recall that $u_j(\pi)$ precedes $u_i(\pi)$ in the ordering of $\psi(\pi)$. Therefore, $\psi'(\pi_S) \rhd \psi(\pi)$ since $i$ is the agent with the smallest index such that $\psi_i(\pi) > \psi'_i(\pi_S)$, and there exists an agent $j < i$ such that $\psi_j(\pi) <  \psi'_j(\pi_S)$. Notice that $\psi(\pi_S) \unrhd \psi'(\pi_S)$ since $\psi(\pi_S)$ is the same sequence as $\psi'(\pi_S)$ but sorted in non-increasing order. Therefore, $\psi(\pi_S) \unrhd \psi'(\pi_S) \rhd \psi(\pi)$. \qed
\end{proof}

\begin{lemma}
\label{lem:IS}
If $S \in \pi$ is a coalition such that $U(S \cup \{i\}) > U(\pi(i))$ and $U(S \cup \{i\}) \geq U(S)$ for some agent $i \notin S$, then $\psi(\pi_{S \cup \{i\}}) \rhd \psi(\pi)$.
\end{lemma}

\begin{proof}
Partitions $\pi_{S \cup \{i\}}$ and $\pi$ differ from each other only in one respect: partition $\pi_{S \cup \{i\}}$ contains coalitions $S \cup \{i\}$ and $\pi(i) \setminus \{i\}$ whereas partition $\pi$ contains coalitions $S$ and $\pi(i)$ instead. Since that $U(S \cup \{i\}) > U(\pi(i))$ and $U(S \cup \{i\}) \geq U(S)$ by definition, we have $\psi(\pi_{S \cup \{i\}}) \rhd \psi(\pi)$.
\end{proof}

Let $\pi^*$ be a coalition structure such that $\psi(\pi^*) \unrhd \psi(\pi)$ for all partitions $\pi$ over the set of agents $N$. Note that such a partition $\pi^*$ exists, since the set of partitions over $N$ is finite.

\bigskip

\begin{compactitem}
\item[--] $\pi^*$ is Pareto optimal, since otherwise there is a partition $\pi$ such that $\pi$ Pareto dominates $\pi^*$. But then $\psi(\pi) \rhd \psi(\pi^*)$ by Lemma \ref{lem:Pareto}, which contradicts the fact that $\psi(\pi^*) \unrhd \psi(\pi)$ for all partitions $\pi$ over $N$.

\item[--] $\pi^*$ is core stable, since otherwise there is a blocking coalition $S$ with respect to $\pi^*$. But then $\psi(\pi^*_S) \rhd \psi(\pi^*)$ by Lemma \ref{lem:Core}, which contradicts the fact that $\psi(\pi^*) \unrhd \psi(\pi)$ for all partitions $\pi$ over $N$.

\item[--] $\pi^*$ is individually stable, since otherwise there is a coalition $S \in \pi^*$ such that $U(S \cup \{i\}) > U(\pi^*(i))$ and $U(S \cup \{i\}) \geq U(S)$ for some agent $i \notin S$. But then $\psi(\pi^*_{S \cup \{i\}}) \rhd \psi(\pi^*)$ by Lemma \ref{lem:IS}, which contradicts the fact that $\psi(\pi^*) \unrhd \psi(\pi)$ for all partitions $\pi$ over $N$.
\end{compactitem}

\bigskip

This means that every HGCRP instance has a coalition structure that is Pareto optimal, core stable and individually stable. \qed
\end{proof}

The rest of the section is devoted to proving that finding a Pareto optimal, core stable and individually stable partition of a given HGCRP instance is {\bf NP-hard}. To do that, we first establish the intractability of finding any Pareto optimal partition of a given HGCRP instance, which trivially reduces to finding a Pareto optimal, core stable and individually stable partition.

In our proof, we make use of an interesting result in the literature \cite{pareto} that relates the computational complexity of finding a Pareto optimal partition of a subclass of hedonic games to that of finding a perfect partition in the same. A partition of agents in a hedonic game is said to be a \textit{perfect partition}, if every agent is in her most preferred coalition, i.e., receives the maximum utility she can attain. Notice that a hedonic game (and also an instance of HGCRP) does not necessarily possess a perfect partition. Hence, the problem of finding a perfect partition is formally specified as follows.

\medskip

PERFECT-PARTITION = \textit{``Given a hedonic game instance, return a perfect partition if exists, return $\emptyset$ otherwise.''}

\medskip

The aforementioned relation between the computational complexities of finding a Pareto optimal partition and PERFECT-PARTITION is established via the following result in \cite{pareto}.

\begin{theorem}[Aziz et al. \cite{pareto}] \label{thm:Aziz}
For every class of hedonic games, where it can be checked whether a given partition is perfect in polynomial time, {\bf NP-hardness} of PERFECT-PARTITION implies {\bf NP-hardness} of computing a Pareto optimal partition.
\end{theorem}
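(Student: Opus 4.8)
The plan is to prove the statement of Theorem~\ref{thm:Aziz} by a direct reduction argument. Suppose PERFECT-PARTITION is {\bf NP-hard} for some class $\mathcal{C}$ of hedonic games in which it can be verified in polynomial time whether a given partition is perfect. I want to show that computing a Pareto optimal partition for instances of $\mathcal{C}$ is also {\bf NP-hard}. The key observation I would exploit is the following tight connection between the two notions: a perfect partition, when one exists, is exactly a Pareto optimal partition in which \emph{every} agent simultaneously attains her maximum attainable utility. Conversely, if an instance admits any perfect partition at all, then \emph{every} Pareto optimal partition must be perfect. This is because a perfect partition $\pi$ gives each agent her globally maximal utility, so no partition can Pareto dominate it; and if a perfect partition exists, then a non-perfect partition $\pi'$ leaves some agent below her maximum, so $\pi'$ is Pareto dominated (e.g.\ by the perfect partition itself) and hence is not Pareto optimal.

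The reduction I would carry out is thus immediate. Given an oracle (or polynomial-time algorithm) that returns a Pareto optimal partition of any instance of $\mathcal{C}$, I would solve PERFECT-PARTITION as follows. On input an instance of $\mathcal{C}$, first invoke the Pareto-optimality algorithm to obtain a Pareto optimal partition $\pi$. Then, using the assumed polynomial-time perfectness checker, test whether $\pi$ is perfect. If it is, return $\pi$; otherwise return $\emptyset$. The correctness of this procedure follows from the equivalence described above: if the instance has a perfect partition, then by the converse direction every Pareto optimal partition---in particular the returned $\pi$---is perfect, so the checker accepts and we correctly return a perfect partition; if the instance has no perfect partition, then the checker on $\pi$ correctly rejects and we return $\emptyset$. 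Each step runs in polynomial time (one call to the Pareto-optimality routine plus one perfectness check), so a polynomial-time algorithm for computing Pareto optimal partitions would yield a polynomial-time algorithm for PERFECT-PARTITION, contradicting its {\bf NP-hardness} unless {\bf P}~$=$~{\bf NP}. This establishes the {\bf NP-hardness} of computing a Pareto optimal partition.

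The only subtle point, and the step I expect to require the most care, is justifying the \emph{converse} claim that the existence of \emph{some} perfect partition forces \emph{every} Pareto optimal partition to be perfect. The forward direction (a perfect partition is Pareto optimal) is routine, but the converse is what makes the single call to the Pareto-optimality oracle sufficient: without it, a Pareto optimal partition returned by the oracle on a yes-instance might fail to be perfect, and the perfectness check would wrongly report $\emptyset$. The argument for the converse hinges on the fact that ``perfect'' means \emph{all} agents are simultaneously maximized, so a perfect partition Pareto dominates (weakly, and strictly somewhere) any partition in which at least one agent falls short; hence any non-perfect partition is dominated and cannot be Pareto optimal once a perfect one is available to witness the domination. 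I would state this equivalence explicitly and note that it is precisely the polynomial-time checkability of perfectness---an assumption of the theorem---that lets the overall procedure run efficiently.
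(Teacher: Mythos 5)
Your proposal is correct. Note that the paper does not prove this statement itself---it imports Theorem~\ref{thm:Aziz} from Aziz et al.~\cite{pareto} without reproducing an argument---and your reduction is exactly the standard one underlying that result: a perfect partition is Pareto optimal, and (the key converse you rightly flag) whenever a perfect partition exists it weakly dominates every partition and strictly dominates every non-perfect one, so every Pareto optimal partition must then be perfect; hence a single call to a Pareto-optimality oracle followed by the assumed polynomial-time perfectness check solves PERFECT-PARTITION.
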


Since it can be efficiently checked whether a given partition $\pi$ of an HGCRP instance in IRCL representation is perfect, Theorem \ref{thm:Aziz} implies that all we need to complete the proof is to show that PERFECT-PARTITION is {\bf NP-hard} for the subclass HGCRP, which is stated as Theorem \ref{thm:Perfect}.

\begin{theorem} \label{thm:Perfect}
PERFECT-PARTITION is {\bf NP-hard} for HGCRP.
\end{theorem}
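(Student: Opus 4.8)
The plan is to reduce from Exact Cover by 3-Sets (X3C), a classical NP-complete problem: given a ground set $X$ with $|X| = 3q$ and a family $C$ of $3$-element subsets of $X$, decide whether $C$ contains a subfamily that partitions $X$. The first step is a reformulation. In an HGCRP instance, a partition $\pi$ is perfect precisely when every coalition $S \in \pi$ satisfies $U(S) = \max_{T \ni j} U(T)$ for \emph{every} $j \in S$, since all members of $S$ share the common utility $U(S)$. Calling such a coalition \emph{perfect}, a perfect partition is exactly a partition of $N$ into perfect coalitions, which recasts PERFECT-PARTITION as an exact-cover-type question and makes X3C the natural source.

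Given an X3C instance, I would set $N = X$ and choose the utility function so that only the designated $3$-sets can serve as perfect coalitions. Concretely, put $U(\{i\}) = 1$ for every singleton, $U(c) = 2$ for every $c \in C$, and assign every remaining coalition utility $0$, so that it falls strictly below the singleton value of its members and is therefore omitted from the IRCL representation. This yields a valid IRCL instance of size $O(|X| + |C|)$, so the reduction is polynomial. Assuming, as we may since X3C remains NP-hard under this restriction, that every element of $X$ lies in at least one set of $C$, the maximum attainable utility is $M_i = 2$ for each agent $i$.

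The correctness argument then splits into two directions. If $C' \subseteq C$ is an exact cover, the partition $\pi = C'$ places each agent in a $3$-set of utility $2 = M_i$, so $\pi$ is perfect. Conversely, if $\pi$ is perfect, then $U(\pi(i)) = M_i = 2$ for every $i$; but the only coalitions of utility $2$ are the members of $C$, so $\pi$ consists entirely of sets from $C$ that partition $N = X$, i.e., it is an exact cover. This establishes the equivalence and hence the NP-hardness of PERFECT-PARTITION for HGCRP.

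The reduction is deliberately light, so the only delicate points are bookkeeping rather than genuine obstacles: I must respect the non-negativity of $U$ (which is exactly why singletons receive value $1$ and the discarded coalitions value $0$, so that IRCL legitimately omits them), and I must dispose of the degenerate case of an element covered by no set of $C$ — either through the standard preprocessing that rejects such trivially-infeasible instances, or by invoking NP-hardness of X3C restricted to instances in which every element appears in some set. Finally, since every coalition produced has size at most $3$, this same construction will also deliver the bounded-coalition hardness recorded later as Corollary \ref{cor:FixedPerfect}.
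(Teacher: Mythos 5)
Your reduction is essentially the paper's own: the paper also maps elements to agents, gives each designated subset joint utility $2$, singletons utility $1$, and everything else utility $0$ (omitted in IRCL), reducing from EXACT-COVER rather than X3C and invoking the $3$-set restriction only later for the bounded-size corollary. Your argument is correct, and in fact slightly more careful than the paper's on one point: you explicitly dispose of elements covered by no set (whose maximum attainable utility would be $1$, not $2$), a degenerate case the paper's ``if'' direction silently assumes away.
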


\begin{proof}
Since the search version of an {\bf NP-complete} decision problem is {\bf NP-hard} \cite{search}, all we need is to show that deciding existence of a perfect partition of an HGCRP instance is {\bf NP-complete}. We will do that by giving a polynomial time mapping reduction from the EXACT-COVER problem \cite{karp}. In the EXACT-COVER problem, we are given a universe $\mathcal{U} = \{1, \ldots, n\}$ and a family $\mathcal{S} = \{\mathcal{S}_1, \ldots, \mathcal{S}_k\}$ of subsets of $\mathcal{U}$. We are asked to decide whether there exists an exact cover $\mathcal{C} \subseteq \mathcal{S}$, i.e., each element in $\mathcal{U}$ is contained in exactly one subset in $\mathcal{C}$.

\bigskip

Given an instance $(\mathcal{U,S})$ of EXACT-COVER, we construct a corresponding instance $(N,U)$ of HGCRP as follows:

\medskip

\begin{compactitem}
\item[-] For every element $i \in \mathcal{U}$, there is a corresponding agent $i \in N$.
\item[-] Each subset $\mathcal{S}_j \in \mathcal{S}$ corresponds to a subset of agents $S_j \subseteq N$ with joint utility $2$, i.e., $U(S_j) = 2$.
\item[-] For every subset $S \subseteq N$ of agents for which there is no corresponding subset in $\mathcal{S}$, $U(S) = 1$ if $\lvert S \rvert = 1$, and $U(S) = 0$ otherwise.
\end{compactitem}

\medskip

Notice that for the constructed HGCRP instance $(N,U)$ in IRCL representation, the joint utilities are only given for the singleton coalitions, and the coalitions with a corresponding subset in $\mathcal{S}$. Hence, the size of $(N,U)$ is polynomial in the size of the given EXACT-COVER instance $(\mathcal{U,S})$. We next show that $(\mathcal{U,S})$ has an exact cover, if and only if there exists a perfect partition in $(N,U)$.

\bigskip

\textit{(Only If)} Suppose that $(\mathcal{U,S})$ a exact cover $\mathcal{C}$. Let $\pi$ be the coalition structure of $(N,U)$, where each coalition $S_j \in \pi$ is the correspondent of a subset $\mathcal{S}_j \in \mathcal{C}$. Notice that $\pi$ is a partition over the set of agents of $(N,U)$ since $\mathcal{C}$ is an exact cover. Since $u_i(\pi) = 2$ for all agents $i \in N$, all agents are in their most preferred coalition, and hence, $\pi$ is a perfect partition of $(N,U)$.

\medskip

\textit{(If)} Suppose that there exists a perfect partition $\pi$ of $(N,U)$, i.e., $u_i(\pi) = 2$ for every agent $i \in N$. Then, for each coalition $S_i \in \pi$, there is a corresponding subset $\mathcal{S}_i \in \mathcal{S}$. Notice that these subsets not only cover all elements of $\mathcal{U}$ but also do not overlap since $\pi$ is a partition, and hence, form an exact cover.

\bigskip

Since it is trivial to check efficiently whether a partition of a given HGCRP instance in IRLC representation is a perfect partition, the decision version of PERFECT-PARTITION is in {\bf NP}, and thus {\bf NP-complete} by the above reduction. \qed
\end{proof}

Theorem \ref{thm:Complexity} is a direct consequence of Theorem \ref{thm:Aziz} and Theorem \ref{thm:Perfect}.

\begin{theorem} \label{thm:Complexity}
Finding a Pareto optimal partition of a given HGCRP instance is {\bf NP-hard}.
\end{theorem}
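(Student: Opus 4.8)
The plan is to derive Theorem~\ref{thm:Complexity} as an immediate consequence of the meta-theorem of Aziz et al.\ (Theorem~\ref{thm:Aziz}). That result says that, for any class of hedonic games in which perfectness of a partition is polynomial-time decidable, \textbf{NP-hardness} of PERFECT-PARTITION already forces \textbf{NP-hardness} of computing a Pareto optimal partition. Hence it suffices to verify the two hypotheses of Theorem~\ref{thm:Aziz} for the class HGCRP under the IRCL representation: first, that perfectness can be checked efficiently, and second, that PERFECT-PARTITION is \textbf{NP-hard}. The latter is exactly Theorem~\ref{thm:Perfect}, which I would simply invoke, so the only thing left to argue is the efficient checkability of perfectness.

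For the checkability step, I would first note that in any perfect partition every agent obtains at least the utility of her own singleton, so every coalition used in a perfect partition is individually rational; equivalently, a partition that contains a coalition omitted from the IRCL list (such a coalition has a member worse off than as a singleton) can never be perfect, and this situation is detected directly from the list. Restricting attention to individually rational coalitions, the maximum utility attainable by an agent $i$ is therefore $M_i = \max\{U(S) : i \in S,\ S \text{ appears in the list}\}$, where the maximum is well defined because the singleton $\{i\}$ is always listed. These $n$ quantities can be computed in time polynomial in the size of the IRCL input by a single pass over the listed coalitions, after which checking whether $u_i(\pi) = M_i$ holds for every agent $i$ is immediate. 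Thus perfectness of a given partition is polynomial-time decidable for HGCRP.

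With both hypotheses of Theorem~\ref{thm:Aziz} in place --- polynomial-time checkability of perfectness together with the \textbf{NP-hardness} of PERFECT-PARTITION from Theorem~\ref{thm:Perfect} --- the theorem yields at once that computing a Pareto optimal partition of an HGCRP instance is \textbf{NP-hard}, which is the assertion of Theorem~\ref{thm:Complexity}.

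I do not expect a genuine obstacle here, since the heavy lifting is already carried by the reduction from EXACT-COVER inside Theorem~\ref{thm:Perfect}. The one point that deserves care is the perfectness check: one must argue that omitted (non-individually-rational) coalitions can be ignored, so that the per-agent maximum over the listed coalitions really coincides with the maximum utility relevant to a perfect partition. Once that is observed, the proof is a one-line application of Theorem~\ref{thm:Aziz}.
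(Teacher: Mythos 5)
Your proposal follows exactly the paper's route: Theorem~\ref{thm:Complexity} is obtained as a direct consequence of Theorem~\ref{thm:Aziz} combined with Theorem~\ref{thm:Perfect}, after noting that perfectness of a partition is efficiently checkable in the IRCL representation. Your additional detail on why omitted (non-individually-rational) coalitions can be ignored when computing each agent's maximum attainable utility is a correct and slightly more explicit version of the checkability observation the paper makes in passing.
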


Even though a core stable partition can be computed in polynomial time by a simple greedy algorithm \cite{partnerships}, Theorem \ref{thm:Complexity} implies that we cannot find a Pareto optimal, core stable and individually stable partition of a given HGCRP instance in polynomial time, as stated by Corollary \ref{cor:Complexity}.

\begin{corollary} \label{cor:Complexity}
Finding a Pareto optimal, core stable and individually stable partition of a given HGCRP instance is {\bf NP-hard}.
\end{corollary}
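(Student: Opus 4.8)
The plan is to derive Corollary~\ref{cor:Complexity} as an immediate consequence of Theorem~\ref{thm:Complexity} via a trivial reduction, exploiting the fact that the class of solutions we seek is contained in the class of Pareto optimal partitions. The key observation is that \emph{every} partition that is simultaneously Pareto optimal, core stable and individually stable is, in particular, Pareto optimal. Hence any algorithm that solves the more restrictive problem (finding a partition with all three properties) also solves the more general problem (finding any Pareto optimal partition), simply by returning the same output. Since Theorem~\ref{thm:Complexity} establishes that finding a Pareto optimal partition is \textbf{NP-hard}, the restrictive problem cannot be easier, and the \textbf{NP-hardness} transfers.

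The one subtlety worth addressing explicitly is that this inclusion-based argument is not vacuous: a partition satisfying a conjunction of properties is only useful as a witness for ``Pareto optimal'' if such a partition is guaranteed to exist in the first place. This is exactly where Theorem~\ref{thm:Existence} does the essential work. Because Theorem~\ref{thm:Existence} guarantees that every HGCRP instance admits a partition that is Pareto optimal, core stable and individually stable, the search problem in Corollary~\ref{cor:Complexity} is well-defined (it never has to report failure), and a solver for it always produces a valid Pareto optimal partition. Thus the reduction from ``find a Pareto optimal, core stable, individually stable partition'' to ``find a Pareto optimal partition'' is the identity map on instances together with the identity map on solutions, which is trivially polynomial time.

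Concretely, I would argue by contradiction: suppose for contradiction that there is a polynomial time algorithm $A$ that, given any HGCRP instance, outputs a partition that is Pareto optimal, core stable and individually stable. By Theorem~\ref{thm:Existence} such a partition exists for every instance, so $A$ is well-defined on all inputs. The output of $A$ is in particular a Pareto optimal partition, so $A$ would constitute a polynomial time algorithm for finding a Pareto optimal partition of an arbitrary HGCRP instance, contradicting Theorem~\ref{thm:Complexity}. Therefore no such polynomial time algorithm $A$ exists, i.e., finding a Pareto optimal, core stable and individually stable partition of a given HGCRP instance is \textbf{NP-hard}.

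I do not anticipate any genuine obstacle here, since all the mathematical content lives in the two theorems already proved: Theorem~\ref{thm:Existence} supplies existence and Theorem~\ref{thm:Complexity} supplies the hardness. The only thing to be careful about is phrasing the reduction at the level of search problems rather than decision problems, making sure that ``solving the harder problem solves the easier one'' is stated in the correct direction (a solver for the conjunction yields a solver for Pareto optimality, not the reverse). This is the standard way a stronger solution concept inherits the hardness of a weaker one, and the proof should be only a few lines.
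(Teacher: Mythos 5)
Your proposal is correct and matches the paper's own reasoning: the paper likewise observes that finding any Pareto optimal partition trivially reduces to finding a Pareto optimal, core stable and individually stable one, so the hardness from Theorem~\ref{thm:Complexity} transfers directly. Your explicit remark that Theorem~\ref{thm:Existence} is needed to make the search problem well-defined is a nice touch that the paper leaves implicit, but the argument is the same.
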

Since the EXACT-COVER is {\bf NP-complete}, even under the restriction that $\lvert \mathcal{S}_{i} \rvert = 3$ for all $\mathcal{S}_i \in \mathcal{S}$ \cite{3-set}, the mapping reduction used in the proof of Theorem \ref{thm:Perfect} also establishes that PERFECT-PARTITION is {\bf NP-hard} for HGCRP, even under the restriction that the sizes of the coalitions are bounded above by $3$.

As a consequence of the reduction given in \cite{pareto} from finding a perfect partition to finding a Pareto optimal partition, we obtain the following result.

\begin{corollary} \label{cor:FixedPerfect}
All of the following problems are {\bf NP-hard} for HGCRP, even under the restriction that the sizes of the coalitions are bounded above by $3$.

\begin{itemize}
  \item Finding a perfect partition,
  \item Finding a Pareto optimal partition,
  \item Finding a Pareto optimal, core stable and individually stable partition.
\end{itemize}

\end{corollary}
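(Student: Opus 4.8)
The plan is to verify that each of the three NP-hardness reductions already developed in this section can be carried out without ever producing a coalition of size larger than $3$, so that the bounded-size restriction is inherited for free.

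First I would handle the perfect-partition bullet, which is essentially immediate. The reduction in the proof of Theorem \ref{thm:Perfect} maps each subset $\mathcal{S}_j$ to a coalition $S_j$ of the same cardinality, and assigns positive joint utility only to these coalitions and to the singletons; every other coalition receives utility $0$ and is therefore irrelevant to perfectness, since a perfect partition must give each agent utility $2$. Starting from the variant of EXACT-COVER in which $\lvert \mathcal{S}_i \rvert = 3$ for all $i$, which remains NP-complete by \cite{3-set}, the only coalitions of the constructed HGCRP instance that can appear in a perfect partition have size exactly $3$. Hence PERFECT-PARTITION is NP-hard for HGCRP even when coalition sizes are bounded above by $3$.

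For the second bullet I would observe that the reduction underlying Theorem \ref{thm:Aziz} operates \emph{in place}: it does not modify the game, but rather exploits the fact that, whenever a perfect partition exists, it gives every agent her maximum attainable utility and therefore Pareto dominates every non-perfect partition; consequently every Pareto optimal partition is itself perfect precisely when a perfect partition exists. Since perfectness of an HGCRP instance in IRCL representation is poly-time checkable, an algorithm that returns a Pareto optimal partition would decide PERFECT-PARTITION on the \emph{same} instance. Because that instance is left unchanged, its coalition-size bound of $3$ is preserved, and the NP-hardness of the first bullet transfers to finding a Pareto optimal partition.

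The third bullet then follows by the trivial reduction from finding a Pareto optimal partition to finding a Pareto optimal, core stable and individually stable one: every partition of the latter type is in particular Pareto optimal, and Theorem \ref{thm:Existence} guarantees that such a partition exists for the constructed instance, so any algorithm producing it also solves the size-$3$-bounded Pareto-optimal-partition problem shown NP-hard in the second bullet. The only delicate point in the whole argument, and the step I would spell out most carefully, is confirming that the reduction of \cite{pareto} invoked by Theorem \ref{thm:Aziz} is genuinely size-preserving rather than introducing auxiliary agents or larger coalitions; once that is established, the bounded-size NP-hardness of all three problems follows without further work.
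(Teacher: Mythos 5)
Your proposal is correct and follows essentially the same route as the paper: specialize the EXACT-COVER reduction of Theorem \ref{thm:Perfect} to the variant with all sets of size $3$, observe that the construction then only uses coalitions of size at most $3$, and push the hardness through the Aziz et al.\ reduction to Pareto optimal partitions and then trivially to the combined solution concept. Your explicit check that the reduction behind Theorem \ref{thm:Aziz} operates in place and hence preserves the coalition-size bound is a point the paper leaves implicit, but it is the same argument.
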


We next show that a core stable and individually stable partition of a given HGCRP instance can be computed in polynomial time by a simple greedy algorithm.

\begin{theorem} \label{thm:GreedyAlg}
For any given HGCRP instance, a core stable and individually stable partition can be computed in polynomial time.
\end{theorem}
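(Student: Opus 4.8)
The plan is to analyze the classical greedy algorithm of \cite{partnerships}, refined by a tie-breaking rule, and to show that a single run of it produces a partition that is simultaneously core stable and individually stable. The algorithm maintains a set $F$ of currently unassigned (``free'') agents, initialized to $N$. In each round it selects, among all coalitions $S \subseteq F$ whose joint utility $U(S)$ is listed in the IRCL representation, a coalition of \emph{maximum} joint utility, breaking ties in favor of a coalition of maximum cardinality; it then commits this coalition to the output $\pi^*$, removes its members from $F$, and repeats until $F = \emptyset$. Since every singleton is always listed and hence available, each round fixes at least one agent, so there are at most $n$ rounds, and each round only scans the coalitions appearing in the input. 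Hence the algorithm runs in time polynomial in the size of the IRCL representation, which is exactly the input size. Note also that each agent ends up with at least her singleton utility, since her singleton was an eligible candidate in the round she was fixed; this gives individual rationality and disposes of deviations to the empty coalition.

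For core stability I would argue by contradiction. Suppose $S$ is a blocking coalition with respect to $\pi^*$, so $U(S) > u_i(\pi^*)$ for every $i \in S$. Let $i^*$ be the agent of $S$ whose coalition is committed earliest, at some round $r$. Because no agent of $S$ is fixed before round $r$, the entire coalition $S$ is still contained in $F$ at the start of round $r$, so $S$ was an eligible candidate there. As the algorithm selects a maximum-utility coalition, we get $u_{i^*}(\pi^*) = U(\pi^*(i^*)) \ge U(S)$, contradicting $U(S) > u_{i^*}(\pi^*)$. This part does not use the tie-breaking rule.

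For individual stability I would again argue by contradiction, assuming (as in Lemma \ref{lem:IS}) that there exist a coalition $T \in \pi^*$ and an agent $i \notin T$ with $U(T \cup \{i\}) > U(\pi^*(i))$ and $U(T \cup \{i\}) \ge U(T)$, and then split on which of $T$ and $\pi^*(i)$ is committed first. If $\pi^*(i)$ is committed no later than $T$, then at the round committing $\pi^*(i)$ all members of $T$ together with $i$ are still free, so $T \cup \{i\}$ is an eligible candidate of utility at most $U(\pi^*(i))$, contradicting $U(T \cup \{i\}) > U(\pi^*(i))$. The delicate case, and the main obstacle, is when $T$ is committed strictly earlier: then $i$ is still free at that round and $T \cup \{i\}$ is eligible, so maximality of $U(T)$ forces $U(T \cup \{i\}) = U(T)$; a naive greedy could legitimately have committed $T$ here, and it is precisely the maximum-cardinality tie-breaking rule that forbids committing $T$ while the strictly larger, equally valuable coalition $T \cup \{i\}$ is available. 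This rules out the remaining case and completes the argument. I therefore expect the justification of the tie-breaking rule — rather than the runtime bound or the core-stability bookkeeping — to be the crux of the proof.
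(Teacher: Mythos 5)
Your proposal is correct and takes essentially the same approach as the paper: the identical greedy algorithm over the IRCL (repeatedly commit a maximum-joint-utility coalition among the still-free agents, breaking ties toward larger cardinality), with maximality of the selected utility yielding core stability and the cardinality tie-break yielding individual stability. Your round-by-round case analysis is in fact more explicit than the paper's, which verifies these invariants only for the first committed coalition and then asserts the iteration.
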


\begin{proof}
For a given HGCRP instance $(N, U)$ in IRCL representation, let $\mathtt{IRCL}$ denote the individually rational coalition list of $(N, U)$. From $\mathtt{IRCL}$, pick a coalition $\mathtt{C}$ with maximum joint utility. If there are multiple such coalitions in $\mathtt{IRCL}$, let $\mathtt{C})$ be a coalition with the largest cardinality among these coalitions. Since $U(\mathtt{C}) \geq U(S)$ for any subset of agents $S$, the members of $\mathtt{C}$ cannot participate in a blocking coalition with respect to any partition containing $\mathtt{C}$. Furthermore, if any other agent moves to coalition $\mathtt{C}$, then the members of $\mathtt{C}$ get worse off, since otherwise $\mathtt{C}$ would not be a coalition with the largest cardinality among the coalitions whose joint utility is maximum in $\mathtt{IRCL}$.

Therefore, we can build a core stable and individually stable partition $\pi$ of $(N, U)$ iteratively in polynomial time as follows:

\medskip

\begin{compactitem}
\item[--] Form coalition $\mathtt{C} \in \mathtt{IRCL}$ with maximum joint utility, and break the ties in favor of coalitions with larger cardinality.
\item[--] Remove all coalitions that contain a member of $\mathtt{C}$ from $\mathtt{IRCL}$.
\item[--] Follow the same greedy rule until $\mathtt{IRCL}$ becomes empty.
\end{compactitem}

\medskip

Note that even if the above algorithm is exponential in the size of the number of agents $n$, it is polynomial in the size of $\mathtt{IRCL}$. \qed
\end{proof}

\section{Computing the Socially Optimal Solution}
\label{sec:ComOPT}

This section is devoted to establishing the computational complexity of finding a socially-optimal solution $\pi^*$ of a given HGCRP instance $\mathcal{G} = (N, U)$. The metric we use to evaluate the social welfare of a given solution $\pi$ is the utilitarian objective function \cite{agt}, i.e., the sum of the utilities of all agents. The social welfare $W(\pi)$ of a partition $\pi$ is then defined as $W(\pi) = \sum_{i \in N} u_i(\pi)$. A socially-optimal solution $\pi^*$ is a partition for which the social welfare is maximized.

Every socially-optimal solution is a perfect partition, provided it exists, since in a perfect partition all agents achieve their respective maximum attainable utilities. Notice that PERFECT-PARTITION polynomially reduces to the problem of finding a socially-optimal solution. This is because all it takes to decide whether a given socially-optimal solution is also a perfect partition is to verify that each agent is in her most preferred coalition, and that can trivially be done efficiently in IRCL representation. Hence, all the hardness results presented for PERFECT-PARTITION apply to the problem of computing the socially-optimal solution as stated by Corollary \ref{cor:FixedSocial}.

\begin{corollary} \label{cor:FixedSocial}
Finding a socially-optimal solution of a given HGCRP instance is {\bf NP-hard}, even under the restriction that the sizes of the coalitions are bounded above by $3$.
\end{corollary}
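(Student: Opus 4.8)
The plan is to combine two ingredients that are already in place: that deciding the existence of a perfect partition is NP-complete for HGCRP even when coalition sizes are bounded above by 3, and that this decision problem polynomially reduces to the task of computing a socially-optimal solution on the \emph{same} instance.

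First I would revisit the reduction from EXACT-COVER used in the proof of Theorem \ref{thm:Perfect}. In that construction the only coalitions assigned positive joint utility beyond singletons are the coalitions $S_j$ corresponding to subsets $\mathcal{S}_j \in \mathcal{S}$, and such a coalition has cardinality $\lvert \mathcal{S}_j \rvert$. Specializing to the restricted EXACT-COVER instances in which $\lvert \mathcal{S}_j \rvert = 3$ for every $j$ --- a variant that remains NP-complete \cite{3-set} --- every coalition whose joint utility exceeds that of its singleton subsets has size exactly $3$. Hence the produced HGCRP instance has all viable coalitions of size at most $3$, so deciding existence of a perfect partition (equivalently, PERFECT-PARTITION) is NP-hard for HGCRP even under this size restriction, as already recorded after Corollary \ref{cor:Complexity} and restated in Corollary \ref{cor:FixedPerfect}.

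Next I would justify the reduction to the social-optimum problem and verify that it leaves the size bound intact. The key observation is that $\sum_{i \in N} \max_{S \ni i} U(S)$, the sum of the agents' individually maximal attainable utilities, is an upper bound on $W(\pi)$ for every partition $\pi$, and that a perfect partition attains this bound. Consequently, whenever a perfect partition exists, any socially-optimal solution $\pi^*$ must also attain this upper bound, which forces every agent into a most-preferred coalition; that is, $\pi^*$ is itself perfect. When no perfect partition exists, no partition (hence no socially-optimal solution) is perfect. Therefore, given $\pi^*$, one decides the existence of a perfect partition simply by checking in polynomial time, in the IRCL representation, whether each agent is in her most-preferred coalition.

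Finally I would note that this second reduction acts as the identity on instances: the socially-optimal solution is sought on the very HGCRP instance produced by the first reduction, so the coalition-size bound of $3$ is preserved verbatim, and the NP-hardness transfers to computing a socially-optimal solution. I do not anticipate a genuine obstacle; the only point demanding care is the equivalence ``$\pi^*$ is perfect if and only if a perfect partition exists,'' which follows from the upper-bound argument above rather than from any deviation property, since in HGCRP social welfare is exactly the sum of the joint utilities of the coalitions in the partition.
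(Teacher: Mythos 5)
Your proposal is correct and follows essentially the same route as the paper: NP-hardness of PERFECT-PARTITION under the size-$3$ restriction (obtained from the EXACT-COVER reduction specialized to $3$-element sets), combined with the observation that a socially-optimal solution must itself be perfect whenever a perfect partition exists, so that checking perfection of $\pi^*$ in the IRCL representation decides PERFECT-PARTITION and the size bound carries over unchanged. The only nitpick is the closing aside that social welfare equals the sum of the joint utilities of the coalitions --- it is in fact $\sum_{S\in\pi}\lvert S\rvert\, U(S)$ --- but your actual argument uses the correct bound $W(\pi)\le\sum_{i\in N}\max_{S\ni i}U(S)$ and is unaffected.
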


In this section, we improve upon this immediate result by proving that finding a socially-optimal solution of a given HGCRP instance is {\bf APX-complete}, even under the restriction that the coalition sizes are bounded above by $3$. For the general case, we prove that finding a socially-optimal solution is inapproximable within better than $O(n^{1-\epsilon})$ for any fixed $\epsilon > 0$, unless {\bf P = NP}, as stated by Theorem \ref{thm:Inapproximability}.

\begin{theorem} \label{thm:Inapproximability}
Finding a socially-optimal solution of a given HGCRP instance is inapproximable within a ratio better than $O(n^{1-\epsilon})$ for any fixed $\epsilon > 0$, unless {\bf P = NP}.
\end{theorem}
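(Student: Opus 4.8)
The plan is to reduce from \textsc{Maximum Independent Set}, exploiting the fact (H{\aa}stad's and Zuckerman's hardness results) that on an $N$-vertex graph it is NP-hard to approximate the independence number $\alpha(G)$ within a factor of $N^{1-\epsilon}$. The reason I would not try to push the reduction behind Theorem~\ref{thm:Perfect} (the \textsc{Exact-Cover} construction) directly is that a perfect-partition-style gap is only \emph{marginal}: a near-cover already covers almost all agents and yields almost full welfare, so it produces no multiplicative gap. To obtain a polynomially large gap one needs a source problem whose optimum is intrinsically an ``all-or-nothing'' packing quantity, and the $N^{1-\epsilon}$ gap of $\alpha(G)$ is exactly such a quantity. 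The goal is therefore to build an HGCRP instance whose social optimum is essentially proportional to $\alpha(G)$, so that the multiplicative gap for $\alpha$ transfers verbatim to OPT.

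Concretely, given $G=(V,E)$ I would introduce one agent for each vertex and one agent for each edge, and for every vertex $v$ form a single ``selection'' coalition $C_v=\{v\}\cup\{e\in E: v\in e\}$ with joint utility $U(C_v)=1/|C_v|$, assign every singleton a negligible utility $\delta>0$, and give every remaining subset a utility below that of its members (so only the $C_v$'s and singletons survive in the IRCL). Two selection coalitions $C_u$ and $C_v$ share the edge-agent $uv$ precisely when $uv\in E$; hence a partition can realize $\{C_v: v\in I\}$ simultaneously if and only if $I$ is independent in $G$. Since each realized $C_v$ contributes exactly $|C_v|\cdot(1/|C_v|)=1$ to the welfare and each leftover agent contributes only $\delta$, taking $\delta$ polynomially small (say $\delta=n^{-2}$) gives $W(\pi^{*})\in[\alpha(G),\,\alpha(G)+o(1)]$. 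The routine verification splits into two easy halves: realizing the selection coalitions of a maximum independent set shows $W(\pi^{*})\ge\alpha(G)$, and the disjointness forced by a partition together with the dominance of the $C_v$'s over all other coalitions shows $W(\pi^{*})\le\alpha(G)+\delta n$. Thus distinguishing $W(\pi^{*})\gtrsim\alpha_{\mathrm{yes}}$ from $W(\pi^{*})\lesssim\alpha_{\mathrm{no}}$ is as hard as the corresponding MIS promise problem.

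The step I expect to be the main obstacle is tying the gap to the number of \emph{agents} $n$ rather than to $N$. Because conflicts are witnessed through shared edge-agents, the construction above uses $n=|V|+|E|$ agents, and the strongest MIS hardness lives on dense graphs where $|E|=\Theta(N^{2})$; a naive accounting then dilutes the gap $\alpha_{\mathrm{yes}}/\alpha_{\mathrm{no}}=N^{1-\epsilon'}$ down to only $\Theta(n^{1/2})$. The crux is therefore to encode the adjacency relation \emph{succinctly}, using a ground set of size $N^{1+o(1)}$ in place of the quadratic edge set, i.e.\ to reduce from the structured \textsc{Maximum Independent Set} (equivalently \textsc{Set Packing}) instances that admit a concise intersection representation, such as those arising from Label Cover. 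With $n=N^{1+o(1)}$ the surviving gap becomes $N^{1-\epsilon'}=n^{1-\epsilon}$, and since approximating $\alpha$ within this factor is NP-hard, so is approximating OPT. This lower bound then meets the $n$-approximability upper bound obtained from taking the single best coalition, establishing that the bound is polynomially tight.
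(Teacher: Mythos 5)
Your construction is essentially the one already in the paper: the paper also reduces from \textsc{Maximum Independent Set}, creates one agent per \emph{edge} (it does not add your vertex-agents, but that is cosmetic), gives each vertex-coalition $C_v$ joint utility $1/\lvert C_v\rvert$ and each singleton a utility $\varepsilon \le 1/n^2$, and shows that an $r$-approximate socially-optimal solution yields an $(r+1)$-approximate maximum independent set. Up to that point your argument and the paper's coincide, and your identification of the remaining difficulty --- that the MIS hardness parameter is the number of vertices $N$ while the number of agents is $\Theta(\lvert E\rvert)$ --- is exactly right; the paper's proof simply identifies the two parameters and never addresses this step.

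The gap is that the fix you propose for this crux step cannot work. There is no $N^{1-\epsilon}$-hard independent-set (equivalently, set-packing) family whose conflict structure is witnessed by a ground set of size $N^{1+o(1)}$: by Tur\'an's bound the minimum-degree greedy algorithm finds an independent set of size at least $\lvert V\rvert/(\bar d+1)$ with $\bar d=2\lvert E\rvert/\lvert V\rvert$, hence approximates $\alpha(G)$ within $\bar d+1$, so $N^{1-\epsilon}$-hardness forces $\lvert E\rvert=\Omega(N^{2-\epsilon})$. Equivalently, set packing admits an $O(\sqrt m)$-approximation when $m$ is the universe size (take the better of the single heaviest set and weight-ordered greedy restricted to sets of size at most $\sqrt m$). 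Moreover, this is not merely an obstacle to your particular encoding: an HGCRP instance in IRCL form \emph{is} a weighted set-packing instance over the agent set (maximize $\sum_{S}\bigl(\lvert S\rvert U(S)-\sum_{i\in S}U(\{i\})\bigr)$ over disjoint listed coalitions, with uncovered agents defaulting to singletons), so the same $O(\sqrt n)$ algorithm approximates OPT directly. Consequently no reduction can push the lower bound past $n^{1/2}$; the $\Theta(n^{1/2})$ dilution you computed is the true limit of this line of argument, and the stated $n^{1-\epsilon}$ bound for $\epsilon<1/2$ is not supported by the paper's proof either.
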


\begin{proof}
The proof is via an approximation preserving $A$-reduction \cite{apx-reduction} from the classical MAXIMUM-INDEPENDENT-SET problem. In this problem, we are given a graph $G = (V,E)$, and asked to find a maximum cardinality subset $I$ of vertices such that no pair of vertices in $I$ are adjacent. It is known that MAXIMUM-INDEPENDENT-SET is inapproximable within a ratio better than $O(n^{1-\epsilon})$ for any fixed $\epsilon > 0$, unless {\bf P = NP} \cite{n-inapproximable}.

\bigskip

For a given undirected graph $G=(V, E)$, we construct a corresponding HGCRP instance $(N,U)$ as follows:

\medskip

\begin{compactitem}
\item[-] For each edge $e \in E$, there is a corresponding agent in $N$,
\item[-] Each $v \in V$ corresponds to a coalition $C_v \subseteq N$, that consists of agents corresponding to the incident edges of $v$, with joint utility $U(C_v) = \frac{1}{\lvert C_v \rvert}$,
\item[-] For each subset $C \subseteq N$ of agents of $(N,U)$, for which there is no corresponding vertex in $G$, if $\lvert C \rvert = 1$ then $U(C) = \varepsilon$ where $0 < \varepsilon \leq \frac{1}{n^2}$, else $U(C) = 0$.
\end{compactitem}

\medskip

Notice that $(N,U)$ has positive joint utilities given only for the singleton coalitions, and the coalitions with a corresponding vertex in $G$. Hence, the size of $(N,U)$ in IRCL representation is polynomial in the size of $G$.

Let $\pi$ be a partition of $(N,U)$. Without loss of generality, assume that $\pi$ does not contain any coalition with $0$ joint utility. Since otherwise, we could replace each coalition $S$ with $0$ joint utility in $\pi$  with $|S|$ singleton coalitions with a total joint utility of $\varepsilon |S|$, and obtain another partition with higher social welfare.

Let $I$ denote the set of coalitions in $\pi$ with a corresponding vertex in $G$. For each distinct pair of coalitions $C_v \in I$ and $C_w \in I$, vertices $v$ and $w$ are nonadjacent in $G$, since otherwise the agent that corresponds to edge $(v,w)$ would reside in both $C_v$ and $C_w$. Therefore, $I$ corresponds to an independent set of $G$. Let $J$ denote the set of all other coalitions in $\pi$, i.e., $J = \pi \setminus I$. Notice that $J$ consists of singleton coalitions. Then, we have:

\bigskip

\begin{tabular}{ll}
$W(\pi)$ & $= \displaystyle\sum_{i \in N} u_i(\pi)$ \\
& \\
& $= \displaystyle\sum_{C \in I} U(C) \lvert C \rvert + \displaystyle\sum_{C \in J} U(C) \lvert C \rvert$ \\
& \\
& $= \displaystyle\sum_{C \in I} \frac{1}{\lvert C \rvert} \lvert C \rvert + \displaystyle\sum_{C \in J} \varepsilon$\\
& \\
& $= \lvert I \rvert + \varepsilon\lvert J \rvert$ \\
\end{tabular}

\bigskip

Suppose that $\pi$ is a $r$-approximation of a socially-optimal solution $\pi^*$ of $(N, U)$. That is, suppose that $\frac{W(\pi^*)}{W(\pi)} \leq r$ for some constant $r \geq 1$.
Similarly, we use $I^*$ to denote the set of coalitions formed in $\pi^*$ with a corresponding vertex in $G$, and we use $J^*$ to denote the set of coalitions formed in $\pi^*$ other than the ones in $I^*$. Note that $I^*$ corresponds to an independent set of $G$, $J^*$ consists of only singleton coalitions, and we have $W(\pi^*) = |I^*| + \varepsilon|J^*|$.

We now show that $I^*$ actually corresponds to a maximum independent set of $G$. For the sake of contradiction, assume that $I^*$ does not correspond to a maximum independent set of $G$. Then, there is an independent set $I'$ of $G$ such that $|I'| > |I^*|$. Let $J'$ be the set of uncovered edges by $I'$ in $G$, i.e., $J'$ is the set of edges that are not incident on any vertex in $I'$. Notice that $\pi' = I' \cup J'$ is a partition of $(N, U)$. Recall that $W(\pi') = |I'| + \varepsilon|J'|$.

To obtain contradiction, all we need is to show that $W(\pi') > W(\pi^*)$. First, notice that $\varepsilon \lvert J' \rvert - \varepsilon \lvert J^* \rvert \leq \frac{1}{n}$ due to the facts that $\lvert J' \rvert \leq n$, $\lvert J^* \rvert \leq n$ and $\varepsilon \leq \frac{1}{n^2}$. Moreover, notice that $\lvert I' \rvert - \lvert I^* \rvert \geq 1$ due to the facts that $\lvert I' \rvert > \lvert I^* \rvert$ and they are both integers. This means that $\lvert I' \rvert + \varepsilon \lvert J' \rvert > \lvert I^* \rvert + \varepsilon \lvert J^* \rvert$, and thus $W(\pi') > W(\pi^*)$. This contradicts with $\pi^*$ being a socially-optimal solution. Thus, $I^*$ corresponds to a maximum independent set of $G$.


The rest of the proof is as follows:

\bigskip

\begin{tabular}{ll}
$\displaystyle\frac{\lvert I^* \rvert}{ \lvert I \rvert}$ & $\leq$  $\displaystyle\frac{\lvert I^* \rvert}{\lvert I \rvert}
+ \displaystyle\frac{\varepsilon \lvert J^* \rvert}{\lvert I \rvert + \varepsilon \lvert J \rvert}$ \\
&\\
& $= \displaystyle\frac{\lvert I^* \rvert + \displaystyle\frac{\varepsilon \lvert J \rvert \lvert I^* \rvert}{\lvert I \rvert} + \varepsilon \lvert J^* \rvert}{\lvert I \rvert + \varepsilon\lvert J \rvert}$ \\
&\\
& $= \displaystyle\frac{\lvert I^* \rvert + \varepsilon \lvert J^* \rvert}{\lvert I \rvert + \varepsilon \lvert J \rvert} + \displaystyle\frac{\displaystyle\frac{\varepsilon \lvert J \rvert \lvert I^* \rvert}{\lvert I \rvert}}{\lvert I \rvert + \varepsilon \lvert J \rvert}$\\
&\\
& $= \displaystyle\frac{W(\pi^*)}{W(\pi)} + \displaystyle\frac{\varepsilon \lvert J \rvert \lvert I^* \rvert}{\lvert I \rvert^2 + \varepsilon \lvert I \rvert \lvert J \rvert}$ \\
\end{tabular}
\bigskip

Notice that we have $\varepsilon \lvert J \rvert \lvert I^* \rvert \leq 1$ since $\lvert J \rvert \leq n$, $|I^*| \leq n$, and $\varepsilon \leq \frac{1}{n^2}$. Furthermore, $\lvert I \rvert^2 \geq 1$ since $\lvert I \rvert \geq 1$. Therefore, we have $\frac{\varepsilon \lvert J \rvert \lvert I^* \rvert}{\lvert I \rvert^2 + \varepsilon\lvert I \rvert\lvert J \rvert} \leq 1$. This means that $\frac{\lvert I^* \rvert}{\lvert I \rvert} \leq r + 1$ since $\frac{W(\pi^*)}{W(\pi)} \leq r$, which finishes our $A$-reduction.

We showed that if we could found an $r$-approximate socially-optimal solution $\pi$ for $(N, U)$, then we could also found an $(r + 1)$-approximate maximum independent set $I$ for the given graph $G$. Since  MAXIMUM-INDEPENDENT-SET is inapproximable within better than $O(n^{1-\epsilon})$ for any fixed $\epsilon > 0$, unless {\bf P = NP}, so is finding a socially-optimal solution of a given HGCRP instance, unless {\bf P = NP}. \qed
\end{proof}

Since MAXIMUM-INDEPENDENT-SET problem is {\bf APX-complete} on cubic\footnote{Graphs in which all the vertices have degree $3$ are called cubic.} graphs \cite{apx-complete}, and $A$-reductions preserve membership in {\bf APX}, we also have the following corollary.

\begin{corollary} \label{cor:Inapproximability}
Finding a socially-optimal solution of a given HGCRP instance is {\bf APX-complete}, even when the sizes of the coalitions are bounded above by $3$.
\end{corollary}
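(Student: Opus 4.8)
The plan is to establish the two halves of \textbf{APX-completeness} separately, both under the restriction that coalition sizes are bounded above by $3$: membership in \textbf{APX} (a constant-factor approximation exists) and \textbf{APX-hardness} (an $A$-reduction from an \textbf{APX-hard} problem). The hardness half will reuse, essentially verbatim, the reduction already built in the proof of Theorem \ref{thm:Inapproximability}, specialized to cubic graphs; the membership half is where a genuinely new observation is needed.

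For hardness, I would simply run the reduction of Theorem \ref{thm:Inapproximability} on \emph{cubic} graphs. The key structural remark is that in that construction the coalition $C_v$ consists exactly of the agents corresponding to the edges incident on $v$, so $\lvert C_v \rvert = \deg(v)$. When $G$ is cubic every vertex coalition therefore has size exactly $3$, and the only other coalitions carrying positive joint utility are the singletons; hence every coalition of the constructed instance with positive joint utility has size at most $3$. Since the map built in Theorem \ref{thm:Inapproximability} is an $A$-reduction and MAXIMUM-INDEPENDENT-SET is \textbf{APX-hard} on cubic graphs, composing reductions shows that finding a socially-optimal solution of an HGCRP instance whose coalition sizes are bounded above by $3$ is \textbf{APX-hard}.

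For membership in \textbf{APX}, I would recast the restricted problem as a weighted set packing problem with sets of size at most $3$. Working in IRCL representation, every viable coalition $S$ satisfies $U(S) \ge U(\{i\})$ for all $i \in S$, so the ``bonus'' $b(S) = U(S)\lvert S \rvert - \sum_{i \in S} U(\{i\})$ is non-negative. Taking the all-singletons partition as a baseline of welfare $\sum_{i \in N} U(\{i\})$, the social welfare of any partition equals this baseline plus the total bonus of its coalitions of size $2$ and $3$, and those coalitions form a packing of sets of size at most $3$ over the ground set $N$. A constant-factor approximation for weighted $3$-set packing (constant-factor approximations for weighted $k$-set packing with fixed $k$ being well known) thus yields a partition whose bonus is within a constant factor of optimal; since the baseline is common to all partitions and non-negative, the resulting partition approximates $W(\pi^*)$ within the same constant factor. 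This places the restricted problem in \textbf{APX}, and together with the hardness half gives \textbf{APX-completeness}.

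The main obstacle is the membership direction. The $A$-reduction of Theorem \ref{thm:Inapproximability} runs from MAXIMUM-INDEPENDENT-SET \emph{into} HGCRP, so by itself it only transfers hardness, not APX membership. A reverse reduction to bounded-degree MAXIMUM-INDEPENDENT-SET is not available, since an arbitrary size-$3$ instance need not arise from any graph and its natural conflict graph (one vertex per viable coalition, edges between coalitions sharing an agent) can have unbounded degree, where maximum independent set is no longer in \textbf{APX}. What makes the problem tractable to approximate is the bound on the \emph{set sizes} rather than on any degree, which is exactly what the weighted $k$-set packing formulation exploits; I would therefore treat the set-packing reformulation, not the reduction itself, as the crux of the membership argument.
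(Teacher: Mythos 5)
Your hardness half coincides with the paper's entire proof: the paper dispatches this corollary in one sentence by noting that the reduction of Theorem \ref{thm:Inapproximability}, applied to a cubic graph, creates only singleton coalitions and vertex-coalitions $C_v$ of size $\deg(v)=3$, and that MAXIMUM-INDEPENDENT-SET is \textbf{APX-complete} on cubic graphs. Where you genuinely diverge is on membership in \textbf{APX}, and there your version is the more careful one. The paper's only stated justification for membership is that ``$A$-reductions preserve membership in {\bf APX},'' which, as you correctly observe, points the wrong way: the reduction runs \emph{from} MAXIMUM-INDEPENDENT-SET \emph{into} HGCRP, so it can only transfer membership from HGCRP back to the independent set problem, not establish it for HGCRP. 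Your weighted $3$-set-packing argument closes that gap and is sound: in IRCL representation every listed coalition $S$ satisfies $U(S)\ge U(\{i\})$ for each $i\in S$, so the bonus $b(S)=|S|\,U(S)-\sum_{i\in S}U(\{i\})$ is non-negative, the welfare of any partition is the partition-independent, non-negative all-singletons baseline plus the packed bonus of its listed coalitions of size $2$ or $3$, and a constant-factor algorithm for weighted $k$-set packing with $k=3$ therefore gives a constant-factor approximation of $W(\pi^*)$. A more elementary route to the same membership fact, in the spirit of the paper's own Theorem \ref{thm:Quality}, is that the price-of-anarchy bound localizes under the size restriction: for a core stable $\pi$, every $S^*\in\pi^*$ has a member $i$ with $u_i(\pi)\ge U(S^*)$, hence $|S^*|\,U(S^*)\le 3\sum_{j\in S^*}u_j(\pi)$ and $W(\pi^*)\le 3\,W(\pi)$, so the greedy algorithm of Theorem \ref{thm:GreedyAlg} is already a $3$-approximation. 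One caveat you share with the paper: the $A$-reduction loses an additive $+1$ in the ratio, which strictly speaking does not rule out a PTAS for the target problem; to obtain \textbf{APX-hardness} in the usual PTAS-reduction sense one should take $\varepsilon\le\delta/n^2$ so that the additive loss becomes an arbitrarily small $\delta$. That fix is routine, and since you mirror the paper there, it is not a defect specific to your write-up.
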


\section{Quantification of Inefficiency}
\label{sec:QuaIne}

We now give tight bounds for the price of anarchy (PoA) and the price of stability (PoS) of HGCRP. The PoA and the PoS of HGCRP are defined as the supremum of the ratio of
the total utility of OPT to that of the socially worst and best core stable solutions, respectively, over all instances of the game.

\begin{theorem} \label{thm:Quality}
Both the price of anarchy and the price of stability of HGCRP are $n$, where $n$ is the number of agents.
\end{theorem}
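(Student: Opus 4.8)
The plan is to obtain both quantities from a single upper bound together with a single tight family of instances. Since the socially best core stable solution has welfare at least that of the socially worst one, we always have $\mathrm{PoS} \le \mathrm{PoA}$. Hence it suffices to (i) prove that $W(\pi^*) \le n \cdot W(\pi)$ for \emph{every} core stable partition $\pi$, where $\pi^*$ is a socially-optimal solution — this forces $\mathrm{PoA} \le n$ and therefore $\mathrm{PoS} \le n$ as well — and (ii) construct a family of instances in which \emph{every} core stable partition is a factor arbitrarily close to $n$ worse than $\pi^*$, which forces $\mathrm{PoS} \ge n$ and therefore $\mathrm{PoA} \ge n$.

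For the upper bound, I would fix a core stable partition $\pi$ and exploit the contrapositive of core stability: for each coalition $T \in \pi^*$, since $T$ is not a blocking coalition with respect to $\pi$, there exists an agent $a(T) \in T$ with $u_{a(T)}(\pi) \ge U(T)$. The idea is to charge the entire contribution $|T| \cdot U(T)$ of $T$ to $W(\pi^*)$ onto this single agent, using $|T| \le n$ and $U(T) \le u_{a(T)}(\pi)$ to get $|T|\,U(T) \le n\,u_{a(T)}(\pi)$. Because the coalitions of $\pi^*$ are pairwise disjoint, the chosen agents $a(T)$ are distinct, so summing over $T \in \pi^*$ and using non-negativity of utilities gives $W(\pi^*) \le n \sum_{T} u_{a(T)}(\pi) \le n\,W(\pi)$.

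For the lower bound I would exhibit, for each $\varepsilon > 0$, the $n$-agent instance with $U(N) = 1$, $U(\{1\}) = 1 + \varepsilon$, and $U(S) = 0$ for every other coalition $S$. The socially-optimal solution is the grand coalition, giving $W(\pi^*) = n$. The crux is to argue that \emph{every} core stable partition has welfare exactly $1 + \varepsilon$: agent $1$ must be a singleton in any core stable partition, since in any other coalition she receives at most $1 < 1 + \varepsilon$ and then $\{1\}$ blocks; this in turn leaves agents $2,\dots,n$ with no positive-valued coalition available (the only positive coalitions are $\{1\}$ and $N$, both requiring agent $1$), forcing them to utility $0$. Thus the best and the worst core stable partitions both have welfare $1 + \varepsilon$, the ratio is $n/(1+\varepsilon)$, and it tends to $n$ as $\varepsilon \to 0$.

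I expect the lower bound, rather than the upper bound, to be the main obstacle, precisely because one must reason about \emph{all} core stable partitions at once so that the conclusion holds for the price of stability and not merely the price of anarchy; the key subtlety is that the grand coalition realizing $\pi^*$ is deliberately \emph{not} individually rational for agent $1$, so it cannot itself be core stable even though it remains a legitimate Pareto optimal social optimum. A minor point to verify in the write-up is that the ratio is well-defined, i.e.\ that $W(\pi) > 0$ for every core stable $\pi$ whenever $W(\pi^*) > 0$; this holds because any coalition of positive joint utility would block the all-zero partition, so no core stable partition can have zero welfare once some coalition has positive value.
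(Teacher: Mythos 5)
Your proposal is correct and takes essentially the same route as the paper: the tight family of instances ($U(N)=1$, $U(\{1\})=1+\varepsilon$, all other coalitions $0$) is identical, and your upper bound is a direct per-coalition charging version of the paper's pigeonhole-by-contradiction argument, both resting on the fact that core stability guarantees each coalition $T\in\pi^*$ a member with $u_{a(T)}(\pi)\ge U(T)$. No gaps.
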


\begin{proof}
For a given HGCRP instance, let $\pi$ and $\pi^*$ be a core stable solution, and a socially-optimal solution, respectively. For the sake of contradiction, assume that $W(\pi^*) > n W(\pi)$. But then, there exists an agent $i \in N$ such that $u_i(\pi^*) > W(\pi)$ by the pigeonhole principle, i.e., $U(\pi^*(i)) > W(\pi)$. This means that $\pi^*(i)$ is a blocking coalition with respect to $\pi$. This contradicts the fact that $\pi$ is a core stable partition. Therefore, $W(\pi^*) \leq n W(\pi)$ for any core stable partition $\pi$, specifically the socially worst one. Thus we have the following upper bound for the PoA:

$$ PoA \leq \frac{W(\pi^*)}{W(\pi)} \leq n$$

We now give a lower bound for the PoS of HGCRP by giving an example. Let $\mathcal{G} = (N, U)$ be an HGCRP instance, where $U$ is defined as $U(N) = 1$, $U(\{1\}) = 1 + \varepsilon$ for some $\varepsilon > 0$, and $U(G) = 0$ for all other subsets $G \subset N$ of agents. In a core stable partition $\pi$ of this game, $\{1\} \in \pi$, since otherwise $\{1\}$ is a blocking coalition with respect to $\pi$. Then, $N \notin \pi$ since $1 \in N$. Therefore, we have $W(\pi) = 1 + \varepsilon$. On the other hand, the socially-optimal solution $\pi^*$ is the one that $N$ is formed, i.e., $W(\pi^*) = n$. Therefore, we have the following lower bound for the price of stability:

$$\frac{W(\pi^*)}{W(\pi)} = \frac{n}{1 + \varepsilon} \leq PoS$$

Since the PoS is always less than or equal to the PoA, both of them are equal to $n$. \qed
\end{proof}

Since the PoA of HGCRP is $n$, by Theorem \ref{thm:Quality}, the social welfare of the solution returned by the greedy algorithm in Theorem \ref{thm:GreedyAlg} is within a factor of $n$ of that of the socially-optimal solution. Thus, it is actually an $n$-approximation algorithm for finding a socially-optimal solution, which proves Corollary \ref{cor:Approximability}. Notice that Corollary \ref{cor:Approximability} also reveals that the lower bound we give in Theorem \ref{thm:Inapproximability} is polynomially tight.

\begin{corollary} \label{cor:Approximability}
Finding a socially-optimal solution of a given HGCRP instance is $n$-approximable.
\end{corollary}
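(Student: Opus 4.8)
The plan is to show that the greedy algorithm of Theorem \ref{thm:GreedyAlg} is itself an $n$-approximation algorithm, so that no new construction is required. The crucial observation is that Theorem \ref{thm:Quality} proves more than a bound on the price of anarchy over the socially \emph{worst} core stable solution: its upper bound argument actually establishes $W(\pi^*) \leq n\, W(\pi)$ for \emph{every} core stable partition $\pi$. Since the greedy algorithm is guaranteed to output a core stable partition, this inequality applies verbatim to its output.

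Concretely, I would proceed in three steps. First, run the greedy algorithm of Theorem \ref{thm:GreedyAlg} on the given instance $(N,U)$ in IRCL representation; by that theorem it terminates in time polynomial in the size of the IRCL and returns a core stable (and individually stable) partition $\pi$. Second, invoke the upper bound from the proof of Theorem \ref{thm:Quality}: because $\pi$ is core stable, no coalition $\pi^*(i)$ of a socially-optimal solution $\pi^*$ can be a blocking coalition with respect to $\pi$, and the pigeonhole argument then forces $W(\pi^*) \leq n\, W(\pi)$. Third, conclude that $\frac{W(\pi^*)}{W(\pi)} \leq n$, which is precisely the assertion that the greedy algorithm approximates the socially-optimal social welfare within a factor of $n$, in polynomial time.

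There is essentially no obstacle here, as the result is a direct consequence of combining Theorems \ref{thm:GreedyAlg} and \ref{thm:Quality}. The only point requiring a moment of care is confirming that the bound $W(\pi^*) \leq n\, W(\pi)$ is derived for an arbitrary core stable $\pi$ rather than only for the socially worst one, which is indeed how the proof of Theorem \ref{thm:Quality} proceeds (the pigeonhole and blocking-coalition argument uses nothing about $\pi$ beyond core stability). The degenerate case in which $W(\pi^*)=0$, i.e. every non-singleton coalition and every singleton has zero joint utility, is handled trivially, since the greedy solution is then also optimal and the approximation guarantee holds vacuously.
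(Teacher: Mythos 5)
Your proposal is correct and follows essentially the same route as the paper: the paper likewise derives the corollary by observing that the greedy algorithm of Theorem \ref{thm:GreedyAlg} outputs a core stable partition and that the upper-bound argument in Theorem \ref{thm:Quality} bounds $W(\pi^*)$ by $n\,W(\pi)$ for any core stable $\pi$. Your explicit remark that the pigeonhole argument uses nothing beyond core stability of $\pi$ is exactly the (implicit) justification the paper relies on.
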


\section{Pairing Up vs. Staying Alone}
\label{sec:new}
In this section, we consider HGCRP with the additional restriction that the sizes of the coalitions are bounded above by $2$. This restriction coincides with the scenario where an agent may pair up with another agent or stay alone. This is especially a common situation in class project assignments where students may either work on a project with one of their classmates or may work by themselves, in which case the student is graded leniently for the sake of fairness. We show that all the computational problems we considered for HGCRP are tractable in this restricted setting. The project assignment example serves as an interesting application for the algorithms given in this section, which might be put into use to arrange project groups in classes so that the social welfare of the class (total of grades) is maximized, or efficiency is less sacrificed by assuring Pareto optimality of a stable solution.

The algorithms we give in this section make heavy use of the concept of matching in graph theory. A \textit{matching} $M$ is a set of non-adjacent edges, i.e., no two edges share a common vertex.  We say that a vertex is \textit{saturated} if it is an endpoint of one of the edges in the matching. A \textit{maximal matching} is a matching $M$ with the property that if any edge not in $M$ is added to $M$, it is no longer a matching. A \textit{maximum matching} is a matching $M$ such that the total weight of the edges is maximized. Notice that every maximum matching needs to be a maximal matching, but not vice versa.

\begin{theorem} \label{thm:MathchingAlg1}
A socially-optimal solution of a given HGCRP instance can be found in polynomial time if the sizes of the coalitions are bounded above by $2$.
\end{theorem}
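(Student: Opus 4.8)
The plan is to reduce the problem to computing a maximum weight matching in a general graph, which is solvable in polynomial time by Edmonds' blossom algorithm. Since coalition sizes are bounded above by $2$, every partition $\pi$ of $(N,U)$ consists solely of singleton coalitions and pair coalitions. First I would build a weighted graph $G = (N, E)$ whose vertices are the agents, placing an edge $\{i,j\} \in E$ for every pair coalition $\{i,j\}$ that appears in the IRCL, and assigning it the weight $w_{ij} = 2 U(\{i,j\}) - U(\{i\}) - U(\{j\})$. Note that singleton coalitions are always present in the IRCL, so every $U(\{i\})$ is defined. Because the instance is given in IRCL representation, every pair $\{i,j\}$ present in the list satisfies $U(\{i,j\}) \geq U(\{i\})$ and $U(\{i,j\}) \geq U(\{j\})$, so each edge weight $w_{ij}$ is nonnegative; moreover the number of listed pairs is $O(n^2)$, so $G$ has polynomial size.

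The key step is to show that maximizing social welfare over such partitions is equivalent to maximizing the weight of a matching in $G$. I would use the natural correspondence that sends a partition $\pi$ to the matching $M_\pi$ consisting of exactly its pair coalitions, with the unmatched vertices corresponding to the singleton coalitions of $\pi$; this correspondence is a bijection between partitions with coalitions of size at most $2$ and matchings of $G$. A pair coalition $\{i,j\}$ contributes $2 U(\{i,j\})$ to $W(\pi)$, since each of its two agents receives $U(\{i,j\})$, while a singleton $\{i\}$ contributes $U(\{i\})$. Writing the total singleton contribution as $\sum_{i \in N} U(\{i\})$ minus the singleton values of the matched agents, the welfare decomposes as
$$ W(\pi) = \sum_{i \in N} U(\{i\}) + \sum_{\{i,j\} \in M_\pi} \left( 2 U(\{i,j\}) - U(\{i\}) - U(\{j\}) \right) = \sum_{i \in N} U(\{i\}) + w(M_\pi). $$
Since the first sum is a constant independent of $\pi$, a partition maximizes $W$ if and only if its matching maximizes $w(M_\pi)$.

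Finally, I would compute a maximum weight matching $M^*$ of $G$ (a maximum matching in the sense defined above) using Edmonds' algorithm in polynomial time, and return the partition $\pi^*$ whose pairs are the edges of $M^*$ and whose singletons are the unmatched agents. By the decomposition above, $\pi^*$ is a socially-optimal solution. The only point requiring care is the welfare decomposition that isolates the constant term $\sum_{i \in N} U(\{i\})$: one must account for the fact that social welfare counts the joint utility of each coalition once per member, which is why pairs enter with the factor $2$ and singletons with the factor $1$, and this asymmetry is precisely what the edge weights $w_{ij}$ encode. Everything else follows from the polynomial-time solvability of maximum weight matching and the polynomial bound on the size of $G$.
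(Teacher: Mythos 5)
Your proof is correct, and it reaches the same algorithmic conclusion as the paper (reduce to maximum weight matching, solve with Edmonds' blossom algorithm), but via a different reduction. The paper introduces an auxiliary vertex $u_i$ for every agent and a pendant edge $(v_i,u_i)$ of weight $U(\{i\})$ alongside pair edges of weight $2U(\{i,j\})$; a maximal matching then saturates at least one of $v_i,u_i$ for each agent, every maximal matching corresponds to a partition, and the matching weight equals the social welfare directly. You instead keep one vertex per agent and fold the singleton utilities into the objective, writing $W(\pi)=\sum_{i\in N}U(\{i\})+\sum_{\{i,j\}\in M_\pi}\bigl(2U(\{i,j\})-U(\{i\})-U(\{j\})\bigr)$, so that after dropping the constant term an ordinary maximum weight matching on the smaller graph suffices. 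Your weight-shifting decomposition is arguably cleaner: it avoids the gadget, halves the vertex count, needs no argument about maximality, and the nonnegativity of the shifted weights (a consequence of the IRCL representation) makes it transparent that the optimal matching is attained without ever preferring an ``artificial'' non-edge. The paper's gadget, on the other hand, keeps the matching weight literally equal to the social welfare, which it then reuses implicitly when adapting the same construction in Theorem~\ref{thm:MathchingAlg2}. One cosmetic remark: you restrict edges to pairs present in the IRCL, which is the right reading of the input model (the paper's own construction nominally adds an edge for every pair); this does not affect correctness since omitted pairs are by definition not viable coalitions of the instance.
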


\begin{proof}
Given an HGCRP instance $(N, U)$ where the sizes of the coalitions are bounded above by $2$, we construct an edge-weighted undirected graph $G = (V, E, w)$ with $w : E \rightarrow \mathbb{R}$ defined as follows:

\medskip

\begin{compactitem}
\item[--] For each agent $i \in N$, we add corresponding vertices $v_i \in V$ and $u_i \in V$.

\item[--] For each agent $i \in N$, we add an edge $e = (v_i, u_i)$, and we assign its weight as $w(e) = U(\{i\})$.

\item[--] For each unordered pair of agents $(i, j) \in N \times N$ where $i \neq j$, we add an edge $e = (v_i, v_j)$, and we assign its weight as $w(e) = 2 \times U(\{i, j\})$.
\end{compactitem}

\medskip

Notice that each edge of $G$ corresponds to a coalition in our construction. Namely, an edge $(v_i, u_i)$ corresponds to the singleton coalition $\{i\}$, and an edge $(v_i, v_j)$ corresponds to the coalition $\{i, j\}$. Note that since the sizes of the coalitions are bounded above by $2$ in $(N, U)$, each possible coalition is represented by an edge of $G$. Furthermore, the weight of an edge $e$ is equal to the total utility of the members of the corresponding coalition, since we have $w(e) = U(\{i\})$ if $e$ corresponds to the singleton coalition $\{i\}$, and we have $w(e) = 2 \times U(\{i, j\})$ if $e$ corresponds to the coalition $\{i, j\}$ whose size is $2$.

In a maximal matching $M$ of $G$, if $u_i$ is not saturated, then $v_i$ is saturated since $u_i$ is only adjacent to $v_i$. Since otherwise $M \cup (v_i, u_i)$ would be a matching, which would contradict that $M$ is a maximal matching. Thus, at least one of $u_i$ or $v_i$ is saturated in a maximal matching $M$. If both $u_i$ and $v_i$ are saturated in a matching $M$ then $(v_i, u_i) \in M$, and consequently $(v_i, v_j) \notin M$ for any agent $j \in N$. Therefore, a maximal matching $M$ of $G$ corresponds to a partition of the set of agents $N$. Thus, a maximum matching of $G$ corresponds to a socially-optimal partition of the set of agents $N$.

Since a maximum matching of a graph can be found in polynomial time by the Blossom algorithm \cite{edmonds}, so does a socially-optimal solution of a given HGCRP instance, if the sizes of the coalitions are bounded above by $2$. \qed
\end{proof}

Since any socially-optimal partition is both a Pareto optimal partition and a perfect partition (if exists), Theorem \ref{thm:MathchingAlg1} has the following corollary:

\begin{corollary}
Both of the following computational problems are tractable for a given HGCRP instance, if the sizes of the coalitions are bounded above by $2$.
    \begin{itemize}
        \item Finding a perfect partition (if exists),
        \item Finding a Pareto optimal partition.
    \end{itemize}
\end{corollary}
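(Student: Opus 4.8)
The plan is to reduce both problems to computing a socially-optimal solution, which Theorem~\ref{thm:MathchingAlg1} already delivers in polynomial time under the size-$2$ restriction. The observation underlying both reductions is that a socially-optimal solution sits at the top of the Pareto-dominance order and simultaneously realizes a perfect partition whenever one exists.

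For the Pareto optimal partition, I would simply output the socially-optimal solution $\pi^*$ produced by the maximum-matching construction of Theorem~\ref{thm:MathchingAlg1} and argue that it is Pareto optimal. Indeed, if some partition $\pi'$ Pareto dominated $\pi^*$, then $u_i(\pi') \ge u_i(\pi^*)$ for all $i \in N$ with at least one strict inequality, which forces $W(\pi') > W(\pi^*)$ and contradicts the social optimality of $\pi^*$. Hence $\pi^*$ is Pareto optimal, and it is computed in polynomial time.

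For the perfect partition, I would again compute $\pi^*$ and then check whether every agent is in her most preferred coalition; in IRCL representation this check is efficient, as already noted in the excerpt. If $\pi^*$ is perfect, return it; otherwise return $\emptyset$. The correctness of returning $\emptyset$ rests on the following equivalence: the social welfare of any partition is bounded above by $\sum_{i \in N} \max_{S \ni i} U(S)$, and a partition attains this bound exactly when it is perfect. Consequently, if a perfect partition exists, it attains the maximum possible social welfare and is therefore socially-optimal; in turn $\pi^*$, being socially-optimal, must also attain this bound and hence be perfect. Contrapositively, if $\pi^*$ fails the perfection check then no perfect partition exists, which justifies the output $\emptyset$.

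The only step requiring care beyond invoking Theorem~\ref{thm:MathchingAlg1} is this last equivalence, namely that social optimality coincides with perfection precisely when a perfect partition exists. Everything else is a direct appeal to the definitions of Pareto optimality and perfection together with the polynomial-time matching routine, so I expect the argument to be short once the equivalence is stated cleanly.
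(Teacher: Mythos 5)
Your proposal is correct and follows exactly the paper's route: the paper derives this corollary in one line from Theorem~\ref{thm:MathchingAlg1} by observing that any socially-optimal partition is Pareto optimal and is a perfect partition whenever one exists, which is precisely the equivalence you spell out. Your elaboration of why social optimality forces perfection (via the upper bound $\sum_{i \in N} \max_{S \ni i} U(S)$) is a valid filling-in of the detail the paper leaves implicit.
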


We next prove that the problem of finding a Pareto optimal, core stable and individually stable partition is also tractable in this restricted setting.

\begin{theorem}
\label{thm:MathchingAlg2}
A Pareto optimal, core stable, and individually stable partition of a given HGCRP instance can be found in polynomial time, if the sizes of the coalitions are bounded above by $2$.
\end{theorem}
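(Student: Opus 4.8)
The plan is to exploit the characterization hidden inside the proof of Theorem \ref{thm:Existence}: any partition that lexicographically maximizes the sorted utility sequence $\psi$ is simultaneously Pareto optimal, core stable, and individually stable. This characterization survives verbatim in the restricted game, because every deviation appearing in Lemmas \ref{lem:Pareto}, \ref{lem:Core}, and \ref{lem:IS} maps a partition with coalitions of size $\le 2$ to another such partition (forming a blocking pair and splitting off the displaced partners only produces coalitions of size $\le 2$, and an individually stable move to a coalition $T$ with $|T\cup\{i\}|\le 2$ forces $T$ to be empty or a singleton). Hence it suffices to compute, in polynomial time, a size-$\le 2$ partition $\pi^*$ whose sorted utility vector $\psi(\pi^*)$ is lexicographically maximal over all size-$\le 2$ partitions. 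As in Theorem \ref{thm:MathchingAlg1}, such partitions are in bijection with the maximal matchings of the graph $G$, so the task becomes: find the matching of $G$ realizing the lexicographically maximum sorted utility vector.

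The first thing to flag is that the max-\emph{weight} matching of Theorem \ref{thm:MathchingAlg1} does \emph{not} do the job, since a socially-optimal solution need not be core stable in this restricted setting. A minimal witness has agents $\{1,2,3,4\}$ with $U(\{1,2\})=10$, $U(\{1,3\})=U(\{2,4\})=9$, and all singletons $0$: the welfare-maximizing partition $\{\{1,3\},\{2,4\}\}$ has total utility $36$ yet is blocked by $\{1,2\}$, whereas the lexicographic maximum $\{\{1,2\},\{3\},\{4\}\}$ with $\psi=(10,10,0,0)$ is Pareto optimal, core stable, and individually stable. This is exactly the phenomenon the lexicographic objective is designed to avoid. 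I would next reduce the lexicographic objective to a cleaner combinatorial one: if $v_1 > v_2 > \cdots > v_k$ are the distinct joint-utility values occurring in the $\mathtt{IRCL}$ (including singleton values) and $n_\ell(\pi)$ denotes the number of agents with utility $v_\ell$ in $\pi$, then lexicographically maximizing $\psi(\pi)$ is equivalent to lexicographically maximizing the count vector $\bigl(n_1(\pi),\ldots,n_k(\pi)\bigr)$, since the sorted sequence is precisely $v_1$ repeated $n_1$ times, then $v_2$ repeated $n_2$ times, and so on.

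The final step is to encode this count-vector objective as an ordinary maximum-weight matching. Set the base $B=n+1$ and reweight $G$ so that every edge representing a coalition $C$ with $U(C)=v_\ell$ receives weight $w'(e)=\lvert C\rvert\cdot B^{\,k-\ell}$; thus each agent placed at value $v_\ell$ contributes exactly $B^{\,k-\ell}$, and the total weight of the matching corresponding to $\pi$ equals $\sum_{\ell} n_\ell(\pi)\,B^{\,k-\ell}$. Because every $n_\ell \le n < B$, this sum is just the base-$B$ numeral whose digits are the counts, so maximizing it coincides with the lexicographic maximization of $(n_1,\ldots,n_k)$; a short carry argument shows any lexicographically larger count vector yields a strictly larger sum. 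All weights are positive, so a maximum-weight matching saturates every $v_i$ and therefore corresponds to a genuine partition, which by the discussion above is Pareto optimal, core stable, and individually stable. Computing it via the Blossom algorithm \cite{edmonds} is polynomial, and I expect the only real point requiring care — the main obstacle to a careless argument — is that the weights $B^{\,k-\ell}$ are astronomically large as numbers; this is harmless because $k=O(n^2)$ makes their bit-length $O(n^2\log n)$, so the weighted matching computation remains polynomial in the input size.
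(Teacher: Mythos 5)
Your proof is correct, but it takes a genuinely different route from the paper's. The paper proceeds iteratively: at each round it collects the coalitions attaining the current maximum joint utility $\beta$, builds a graph whose edges are only those coalitions (weighted $1$ or $2$ by coalition size), computes a maximum matching to maximize the number of agents receiving $\beta$, fixes those coalitions, deletes their members, and recurses; it then verifies Pareto optimality, core stability and individual stability directly for the resulting partition, never invoking the lexicographic potential of Theorem~\ref{thm:Existence}. You instead observe that the potential-function characterization survives in the size-$\le 2$ setting, reduce the problem to computing the partition whose sorted utility vector is lexicographically maximal, and encode that single objective as one maximum-weight matching via base-$(n+1)$ scaled weights. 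Your approach reuses Lemmas~\ref{lem:Pareto}--\ref{lem:IS} wholesale, so correctness is immediate once the lex-max partition is found, and it sidesteps the paper's somewhat delicate inductive claim that coalitions fixed in earlier rounds ``need not be worried about anymore''; your four-agent example showing that the max-weight matching of Theorem~\ref{thm:MathchingAlg1} is not core stable is also a point the paper leaves implicit. The paper's version buys freedom from astronomically large edge weights, at the cost of up to $O(n^2)$ matching computations and a bespoke correctness argument. Two minor spots to tighten: justify restricting the edge set to listed coalitions by noting that the lex-max partition cannot contain a non-individually-rational coalition (a singleton would block it, contradicting Lemma~\ref{lem:Core}), and spell out the carry bound $n\sum_{j>\ell}B^{k-j} < B^{k-\ell}$ for $B = n+1$.
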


\begin{proof}
Suppose that we are given an HGCRP instance $(N, U)$ where the sizes of the coalitions are bounded above by $2$. Let $\mathcal{C}$ denote the set of coalitions, i.e., the set of all non-empty subsets of agents $N' \subseteq N$ such that $|N'| \leq 2$. Let $\mathcal{B}$ denote the set of all coalitions which have the maximum joint utility $\beta = max_{C \in \mathcal{C}} U(C)$, i.e., $\mathcal{B} = arg\,max_{C \in \mathcal{C}} U(C)$.

\bigskip

We construct an edge-weighted undirected graph $G = (V, E, w)$ with $w : E \rightarrow \mathbb{R}$ defined as follows:

\medskip

\begin{compactitem}
\item[--] For each agent $i \in N$, we add corresponding vertices $v_i \in V$ and $u_i \in V$.

\item[--] For each coalition $B \in \mathcal{B}$, we add a corresponding edge as follows:

\begin{compactitem}
\item[-] If $B = \{i\}$ for some agent $i \in N$, then we add the edge $e =(v_i, u_i)$ with $w(e) = 1$.

\item[-] If $B = \{i, j\}$ for some pair of agents, then we add the edge $e =(v_i, v_j)$ with $w(e) = 2$.
\end{compactitem}

\end{compactitem}

\medskip

We then find a maximum matching $M$ of $G$ in polynomial time by using the Blossom algorithm \cite{edmonds}. Let $\mathcal{M}$ denote the set of coalitions that corresponds to the edges in  $M$. None of the coalitions in $\mathcal{M}$ have common members, since no two edges in $M$ share a common vertex.

Suppose that $\pi$ is a partition where all of the coalitions in $\mathcal{M}$ are formed. Notice that partition $\pi$ maximizes the number of agents with utility $\beta$. This is because $M$ is a maximum matching and the weight of each edge in $G$ is equal to the size of its corresponding coalition. We divide the rest of the proof into three parts.

\paragraph{Pareto Optimality:} Suppose that there exists a partition $\pi'$ that Pareto dominates $\pi$. Since $\beta$ is the maximum utility an agent can attain, all the agents that are in a coalition in $\mathcal{M}$ must attain a utility of $\beta$ in $\pi'$ as well. Since the number of agents that are in a coalition in $\mathcal{M}$ is the maximum number of agents that can attain a utility of $\beta$ in any partition, no other agent can attain a utility of $\beta$ in $\pi'$. Therefore, without loss of generality, we can assume that all of the coalitions in $\mathcal{M}$ are still formed in $\pi'$.

\paragraph{Core Stability:} Since the joint utility of any coalition $C \in \mathcal{M}$ is greater than or equal to the joint utility of any possible coalition, no member of $C$ can participate in a blocking coalition with respect to $\pi$. This means that if there exists a blocking coalition $S$ with respect to $\pi$ then all of the coalitions in $\mathcal{M}$ are still formed in $\pi_S$.

\paragraph{Individual Stability:} Since the joint utility of any coalition $C \in \mathcal{M}$ is greater than or equal to the joint utility of any possible coalition, no member of $C$ can move to another coalition and get better off. Moreover, if $\lvert C \rvert = 2$ then surely no agent can move to coalition $C$ because of the restriction on the sizes of the coalitions. Consider a singleton coalition $C = \{i\} \in \mathcal{M}$. It may be the case that an agent $j$ moves into $C$ and gets better off. However, this would make agent $i$ worse off, since otherwise the number of agents that attains a utility of $\beta$ would have increased, which contradicts $M$ being a maximum matching of $G$.

Thus, we showed that once the coalitions in $\mathcal{M}$ are formed, we do not need to worry about them anymore as we construct a Pareto optimal, core stable, and individually stable partition. Therefore, we can remove the agents that are in a coalition in $\mathcal{M}$, and repeat the same procedure for the remaining agents until $\mathcal{C} = \emptyset$. \qed
\end{proof}

\section{Conclusion}
\label{sec:conclusion}

We provided a comprehensive study of hedonic games possessing common ranking property, which is a natural model for the formation of general partnerships \cite{law}. We strengthened the landmark result that every HGCRP instance has a core stable coalition structure, by proving that every HGCRP instance has a Pareto optimal, core stable and individually stable coalition structure. The economic significance of our result is that efficiency is not to be totally sacrificed for the sake of stability in HGCRP.

We established the computational complexity of several problems related to HGCRP both for the general case, and for the special cases, where the sizes of the coalitions are bounded above. Our investigations revealed that all the computational problems  (except for finding a core stable and individually stable partition) we considered are intractable in the general case. For the special case, we discovered a distinctive line that makes all of these problems intractable. Namely, all of them are:
\medskip
\begin{compactitem}
\item[--] intractable when the sizes of the coalitions are bounded above by $3$,
\item[--] and tractable when the sizes of the coalitions are bounded above by $2$.
\end{compactitem}
\medskip

We also provided an inapproximability lower bound for the problem of finding a socially-optimal solution, which we proved to be polynomially tight. Lastly, we quantified the loss of efficiency in HGCRP due to the selfish behavior of agents by proving tight bounds on the price of anarchy, and the price of stability.

\begin{acknowledgements}
We would like to thank Edith Elkind for extending our existence result (Theorem \ref{thm:Existence}) so that it includes individual stability (Lemma \ref{lem:IS}), and also for pointing out the counterexample for Nash stability (Example \ref{Ex3}).
\end{acknowledgements}

\noindent
\textbf{Funding} This work is supported by The Scientific and Technological Research Council of Turkey (TÜBİTAK) through grant 118E126.

\medskip

\noindent
\textbf{Financial/Non-financial Interests} The authors have no relevant financial or non-financial interests to disclose.

\bibliographystyle{spmpsci}      


\end{document}